%
%
\documentclass[draftcls, journal, onecolumn, 12 pt]{IEEEtran}
\usepackage{amsfonts, cite}
\usepackage{amssymb}
\usepackage{color}
\usepackage{subfigure}
\usepackage{scalefnt}
\usepackage{amsmath}
\usepackage{amsthm} 
\usepackage{amsfonts}
\usepackage{colordvi}
\usepackage{verbatim}
\usepackage{graphicx}
\usepackage{latexsym}
\usepackage{bm}
\usepackage{cite}

\usepackage{times}

\usepackage{algpseudocode}

\newcommand{\Ex}{\mathbb{E}\/}
\newcommand{\Px}{\mathbb{P}\/}
\newcommand{\CN}{\mathcal{CN}}
\newcommand{\LN}{\mathcal{LN}}

\usepackage{paralist}
\defaultleftmargin{1em}{}{}{}

\newtheorem*{theorem*}{Theorem}

\newtheorem*{proposition*}{Proposition}

\title{Analyzing the Impact of Access Point Density on the Performance of Finite-Area Networks}%

\author{S. Alireza~Banani, Andrew W.~Eckford, Raviraj~S.~Adve%

\thanks {This article has been accepted for publication in a future issue of this journal, but has not been fully edited. Content may change prior to final publication. Citation information: DOI 10.1109/TCOMM.2015.2481887, IEEE Transactions on Communications.}
\thanks{This research was supported by TELUS and the Natural Sciences and Engineering Research Council (NSERC).}
\thanks{S. A. Banani and R. S. Adve are with the ~Dept. of Electrical and Computer Engineering, University of Toronto, Toronto, Canada; emails: alireza.banani@utoronto.ca, rsadve@comm.utoronto.ca}%
\thanks{A. Eckford is with the Dept. of Electrical Engineering and Computer Science, York University, Toronto, Canada; email: aeckford@yorku.ca}}

\IEEEaftertitletext{\vspace{-1.25\baselineskip}}

\begin{document}
\bibliographystyle{ieeetr}
\maketitle \thispagestyle{empty}

\begin{abstract}
Assuming a network of infinite extent, several researchers have analyzed small-cell networks using a Poisson point process (PPP) location model, leading to simple analytic expressions. The general assumption has been that these results apply to finite-area networks as well. However, do the results of infinite-area networks apply to finite-area networks? In this paper, we answer this question by obtaining an accurate approximation for the achievable signal-to-interference-plus-noise ratio (SINR) and user capacity in the downlink of a \textit{finite-area} network with \textit{a fixed number of} access points (APs). The APs are uniformly distributed within the area of interest. Our analysis shows that, crucially, the results of infinite-area networks are very different from those for finite-area networks of low-to-medium AP density. Comprehensive simulations are used to illustrate the accuracy of our analysis. For practical values of signal transmit powers and AP densities, the analytic expressions capture the behavior of the system well. As an added benefit, the formulations developed here can be used in parametric studies for network design. Here, the analysis is used to obtain the required number of APs to guarantee a desired target capacity in a finite-area network.

\begin{IEEEkeywords}
Finite-area networks, downlink, coverage probability, small cells, Moment Matching Approximation, Poisson point process
\end{IEEEkeywords}

\end{abstract}

\section{Introduction}
\label{sec:introduction}
As the available user capacity in traditional cellular systems has saturated, the wireless industry is planning on the introduction of small-cell networks, including outdoor access points (APs) and/or indoor femtocell APs. With limited location planning possible in such networks, these APs are placed in an irregular manner; the APs are modeled as having random locations. Importantly, the available analysis techniques largely focus on the asymptotic case of networks of infinite extent. Our motivation, on the other hand, is analyzing finite area networks such as networks that provide coverage inside buildings, or at outdoor hotspots. Given the lack of accurate and tractable analysis techniques for \emph{finite-area} networks with a finite number of APs, it has generally been assumed that the infinite-network results directly apply~\cite{mao2012towards}. However, as our work will show, for practical values of system parameters, this is not always true. We will analyze this discrepancy in the context of metrics relevant to a network designer.

\subsection{Literature Survey and Motivation}
Traditional network models are either impractically simple (e.g., the Wyner model~\cite{wyner1994shannon}) or excessively complex (e.g., general case of random user location with APs on a hexagonal lattice~\cite{goldsmith2005wireless}) to accurately model small-cell networks. A useful mathematical model that accounts for the randomness in AP locations (and irregularity in the cells) uses spatial point processes, such as the Poisson point process (PPP), to model the location of APs in the network~\cite{andrews2011tractable,huang2013analytical, dhillon2012modeling,dhillon2013downlink,di2013average,heath2013modeling}. This allows for the use of techniques from stochastic geometry~\cite{haenggi2009stochastic,win2009mathematical,chiu2013stochastic} and large-deviation theory~\cite{huang2013analytical} to characterize the distribution of the signal-to-interference-plus-noise-ratio (SINR) and/or user capacity in large networks. For example, assuming the networks are of infinite extent, rate expressions are available, e.g.~in~\cite{dhillon2013downlinkrate,madhusudhanan2012downlink, blaszczyszyn2013equivalence}, while accounting for path loss, small-scale fading and log-normal shadowing.

A review of different network models in the literature is helpful in understanding the motivation for our work. Two models are relevant here: the infinite-network model, as the name suggests, assumes a network of infinite geographical extent usually with a fixed AP density;
on the other hand, the dense-network model
%
considers a finite area with large AP density. Both have been widely used in the asymptotic analysis of networks (asymptotic in the number of APs)~\cite{mao2012towards}. The assumption of an infinitely large network, coupled with AP locations modeled as a PPP, allows for analytic tractability.

Although such infinite-area analyses provide convenient closed-form expressions, they do not completely reflect the more realistic case of a \textit{finite-area} network, especially with a low AP density or \textit{finite number} of APs. As recent work has shown, treating a finite-area network as spread over an infinite area is accurate for cases with very high pathloss exponents (e.g., $\alpha =6$); for more realistic values such as the range of $2 \leq \alpha \leq 4$, the infinite-area assumption underestimates network performance significantly~\cite{vijayandran2012analysis}.

\begin{figure}[t]
 \begin{center}
\includegraphics[width = .5\textwidth]{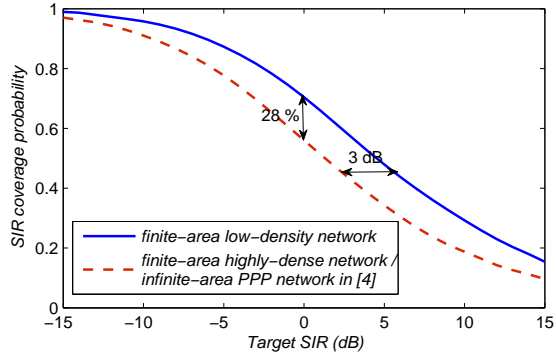}
 \caption{Comparison of the SIR coverage probability obtained at the centre of a circular finite-area interference-limited network with AP densities of $\lambda = 1~\textmd{AP/km}^2$ and $\lambda = 30~\textmd{APs/km}^2$ under Rayleigh fading and no shadowing with PLE of $\alpha =3.87$.}
 \label{Fig_1}
 \end{center}
 \end{figure}

To motivate this paper, Fig.~\ref{Fig_1} illustrates this issue via simulations. For a circular interference-limited network of radius $1~\textmd{km}$, the figure plots the signal-to-interference ratio (SIR) coverage probability at the network center with AP densities of $\lambda = 1~\textmd{AP/km}^2$ and $\lambda = 30~\textmd{APs/km}^2$ for the path loss exponent (PLE) of $\alpha =3.87$ under purely Rayleigh fading. Note that in interference-limited infinite-area networks, coverage probability (CP) is not a function of density, i.e., the ``infinite-area PPP" curve, obtained from the work in~\cite{andrews2011tractable}, is valid for all AP densities. The PPP curve, for an infinite-area network, exactly matches the curve for the finite-area high-density network ($\lambda = 30~\textmd{APs/km}^2$), i.e., a highly dense finite-area network can be closely approximated as a network of infinite extent. However, as seen from the figure, the finite-area low-density network outperforms the dense network by 28\% in SIR CP for the target SIR of 0 dB (or by 3~dB in SIR).

There are only a few works that have investigated the performance of finite-area networks~\cite{vijayandran2012analysis,srinivasa2007modeling, salbaroli2009interference,torrieri2012outage,torrieri2013analysis}. For example, in~\cite{vijayandran2012analysis,srinivasa2007modeling}, the authors use a moment generating function (MGF) approach to characterize the interference under Rayleigh fading. Another set of related works with APs distributed as a PPP, can be found in~\cite{torrieri2012outage, torrieri2013analysis} where closed-form expressions are obtained for the instantaneous outage probability for a \textit{given realization} of AP locations. The authors then use Monte Carlo simulations to obtain the outage probability averaged over network realizations.

\subsection{Our Contributions}
Our goal in this paper is to analyze the performance of a circular\footnote{The choice of a circular area is for simplicity that can lead to tractable analysis; given relevant distance probability distribution functions in the literature, other geometric shapes could be analyzed.} finite-area network with a finite number of APs. Our motivating examples are indoor networks~\cite{nitaigour2007sensor, quek2013small} and outdoor hotspots~\cite{bazzi2006wlan, liang2011dfmac} . We wish to provide a network designer the ability to quickly analyze the impact of various network parameters. Specifically, we develop an expression which closely approximates the SINR and user capacity at any point in a circular, finite-area, network serviced by a fixed number of APs. Our work differs from that available in the literature (for finite areas) in fixing the number of APs, allowing for a random connection distance and accounting for shadowing and noise. This difference adds complexity to the analysis, but better represents the problem at hand.
In our model, the $N$ APs within the circle are uniformly distributed\footnote{The uniform distribution is equivalent to a homogenous PPP (of corresponding density) \emph{conditioned on having $N$ APs within the circle} (in this case the conditional PPP is a binomial point process~\cite{moller2004statistical, illian2008statistical}).}. As in~\cite{andrews2011tractable}, we analyze downlink transmissions where independent users are associated with their \textit{nearest} AP, while all other APs act as interferers. We obtain the coverage probability within any point inside the circle. Unlike previous works on finite-area networks that focus on Rayleigh fading exclusively, our model accounts for path-loss, small-scale fading and shadowing.

To confirm the accuracy of our analysis, we compare our analytic results with that of Monte-Carlo simulations. For practical values of transmit signal powers and AP densities, our approximations capture the behaviour of the network very well. Our results match the reports presented earlier~\cite{vijayandran2012analysis}, in that, in the interference-limited case, the SIR coverage probability performance of an infinite-area network (equivalently, dense network) underestimates that of a low-density network. As an added benefit, the expressions developed here allow us to quantify the gains, in terms of coverage probability, by adding APs within the circle.

Motivated by the desire to provide network design tools, we then focus on the origin of the circular area - this location has the worst user capacity. The worst-case user capacity has been used, e.g. in~\cite{goldsmith2005wireless, rappaport1996wireless, karakayali2006network,banani2013analyzing}, for network design in wireless networks with and without cooperation amongst APs. For the special choice of PLE $\alpha=4$, we derive a closed-form expression for the worst-case user capacity. As a design example, the user capacity at the worst-case point is used to obtain the number of APs required to guarantee a minimum coverage probability everywhere within the area under consideration. This corresponds to designing a finite-area network with a coverage guarantee.

We note that this paper differs from the works in~\cite{vijayandran2012analysis,srinivasa2007modeling} in three aspects:
\begin{itemize}
 \item We allow for a random connection distance from user to its serving AP which better accounts for the randomness in AP locations and irregularity in the cells.
 \item In~\cite{vijayandran2012analysis,srinivasa2007modeling} the number of APs falling in the chosen area is random whereas in the proposed work is fixed.
 \item While in~\cite{vijayandran2012analysis,srinivasa2007modeling} the authors use the MGF approach to characterize the interference under Rayleigh fading, here we use the moment matching approximation~\cite{pratesi2006generalized} which allows us to account for shadowing as well as small-scale fading.
\end{itemize}

\subsection{Organization and Notation}
The rest of the paper is organized as follows: Section~\ref{sec:SysModel} describes the system under consideration. The analysis, the main contribution of this paper, is presented in Section~\ref{sec:SINR}; while supporting simulation results are presented in Section~\ref{sec:Sims}. Finally, Section~\ref{sec:Conc} summarizes and concludes the paper.
The notation used is conventional: matrices are represented using bold upper case and vectors using bold lower case letters; $(\cdot)^{H}$ , and $(\cdot)^{T}$ denote the conjugate transpose, and transpose, respectively. $a \sim \CN(\mu,\sigma^2)$ or $\sim \mathcal{N}(\mu,\sigma^2)$ denote complex and real Gaussian random variables, respectively, with mean $\mu$ and variance $\sigma^2$ while $X\sim\LN(\mu_{x},\sigma_x^2)$ represents a log-normal random where $\ln(X) \sim \mathcal{N}(\mu_{x},\sigma_x^2)$. $Q(x)$ represents the standard \emph{Q}-function, the area under the tail of the standard normal distribution i.e., $\mathcal{N}(0,1)$; $f(\cdot)$ denotes a probability density function (PDF) while $F(\cdot)$ denotes the cumulative distribution function (CDF). Finally, $\Px\{\cdot \, \}$ denotes the probability of an event and $\Ex\{\cdot \,\}$ denotes expectation.

\section{Downlink System Model} \label{sec:SysModel}

\subsection{Assumptions and Initial Analysis}

In this paper we develop an analytical formulation of achievable SINR and user capacity within a finite-area network for a given AP density. The analysis is based on some simplifying approximations and assumptions which are summarized here:

\begin{itemize}
 \item We focus on the downlink of a single-tier finite-area reuse-1 network comprising $N$ APs located in a circular area $\textmd{W}$ with radius $R_{\textmd{W}}$.
 \item The APs are uniformly distributed within the circle; this partitions the circular area into Voronoi cells.
 \item All APs transmit at a power level of $\sigma_s^2$.
 \item Users are associated with the closest AP, i.e., users within a specific Voronoi cell connect to the AP within that cell\footnote{The choice of user association based on minimum distance criterion is for simplicity that can lead to tractable analysis; an improvement in performance can be obtained with the user association based on the strongest received power in a network under shadowing~\cite{dhillon2013downlinkrate,singh2014joint}}.
 \item There are a large number of users uniformly distributed within the network and so each AP is fully loaded serving an equal number of users at a given time. If $K$ denotes the total number of users to be served in the network, each AP serves $K/N$ users at any given time over $K/N$ frequency slots (one frequency slot per user).
 \item For a given $N$, the bandwidth per user served is fixed, i.e., the total bandwidth is divided into $K/N$ equal frequency slots. The total bandwidth is $K\bar{W}$, and each user is allocated a bandwidth of $W_0 = K\bar{W}/(K/N) = N\bar{W}$\footnote{Essentially, the assumption here is that as the number of APs increases, each AP serves fewer users and the bandwidth per user is linearly proportional to the number of APs.}.

    \emph{Comment}: Essentially we are assuming that since there are a large number of users uniformly distributed within the network, at any given time, there exist at least $K/N$ users within an specific Voronoi cell connecting to the AP in that cell. Furthermore, since the average number of users per AP is inversely proportional to the number of APs ($N$), we assume that the bandwidth available \emph{per user} is proportional to $N$.

    We note that other reasonable loading models are possible~\cite{singh2014joint,singh2013offloading}. For example, we could make the bandwidth per user, $W_0$, independent of $N$. Alternatively, one could make the number of users per AP a random variable and each user allocated a random bandwidth; this is very hard to analyze. One could also consider a fixed bandwidth per user with some bandwidth ``wasted" at APs with few associated users. Such a scenario could be analyzed within our framework using a thinned PPP. Our choice is based on the intuition that, as $N$ increases, fewer users are served by each AP, and so, more bandwidth should be available per user.

\end{itemize}

\begin{figure}[t]
\begin{center}
\includegraphics[width=0.49\textwidth]{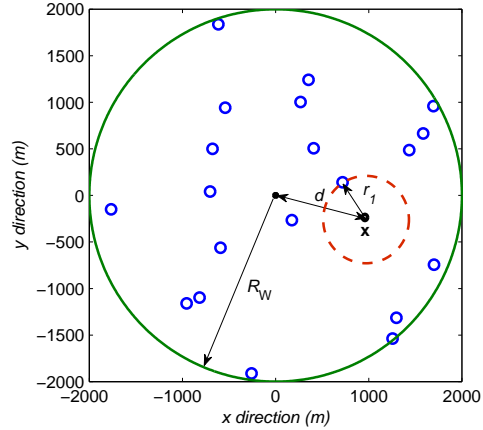}
\caption{One realization of the location of APs based on a PPP in a circular region with radius $R_{\textmd{W}} =2~ \textmd{km}$ and $N=20$. In the figure, $r_{1}$ and $d$ are the distances of an arbitrary point $\textbf{x}$ to the closest AP and to the centre of the circular area, respectively.}
\label{Fig_2}
\end{center}
\end{figure}

Figure~\ref{Fig_2} depicts one realization of our network with $N=20$ and $R_{\textmd{W}} =2~\textmd{km}$. Let $h_{1} $ denote the instantaneous channel from a user located at an arbitrary point $\textbf{x}$ to the nearest AP a distance $r_{1}$ away. Further, let $h_{j}, j=2,\cdots,N$ denote the corresponding channels between the user and the remaining $N-1$ interfering APs. Note that these APs are located outside the circle centered at $\textbf{x}$ with radius $r_{1}$ (between the circle centered at $\textbf{x}$ with radius $r_{1} $ and circle with radius $R_{\textmd{W}}$). Similarly, let $r_{j}, j=2,\cdots,N$ represent the distance\footnote{We note that $r_{j}, j=2,\cdots,N$ are un-sorted, independent and identically distributed, distances to the $N-1$ interfering APs.} from the \textit{j-}th AP to the user at point $\textbf{x}$, and let $\textmd{PL}(r_{j})$ represent the path loss (in dB) over this distance. The instantaneous channels $h_{j} , j=1,\cdots,N$ are modeled as
\begin{equation} \label{eq_2}
h_{j} =\bar{h}_{j} \times 10^{-(\textmd{PL}(r_{j} )+L_{j} )/20},
\end{equation}
where $\bar{h}_{j} \sim \CN(0,1)$ represents the normalized complex channel gain, reflecting small-scale Rayleigh fading, from the  \textit{j}-th AP to the user, which is independent from $\bar{h}_{i},i\ne j$; and where $L_{j} {\kern 1pt} \sim \mathcal{N}(0,\sigma _{L} )$ models the large-scale fading or shadowing, modeled as a lognormal random variable. The standard deviation (STD) $\sigma _{L}^{}$ is expressed in dB. The path loss, in dB, is given by $\textmd{PL}(r_{j})=10\,\alpha \log _{10}r_{j}$, where $\alpha$ denotes the path loss exponent.

Let $\sigma _{n}^{2}$ represent the power of the thermal noise, and $I_{r_{1}} $ denote the power of the interference from the $N-1$ interfering APs. The instantaneous SINR of the user at a random distance $r_{1}$ from its nearest AP can be expressed as
\begin{equation} \label{eq_3}
\mathtt{SINR}_{\,r_{1}} =\frac{\sigma _{s}^{2} \left|h_{{\kern 1pt} 1} \right|^{2} }{\sigma _{n}^{2} + I_{r_{1}}} =\frac{\sigma _{s}^{2} \left|\bar{h}_{{\kern 1pt} 1} \right|^{2} r_{1}^{-\alpha } z_{1} }{\sigma _{n}^{2} + \sum _{j=2}^{N}\sigma _{s}^{2} \left|\bar{h}_{j} \right|^{2} r_{j}^{-\alpha } z_{j}},
\end{equation}
where $z_{j} =10^{-L_{j} /10}, j=1,\cdots,N$ are independent lognormal RVs as $z_{j} \sim \LN(\mu_{z}=0, \sigma _{z} =(0.1\ln 10) \sigma _{L})$. Thus, the instantaneous achieved SINR depends on $r_{1}$ (both via $\textmd{PL}(r_{1})$  and $I_{r_{1}} $) as well as the instantaneous realizations of $\bar{h}_{j}$, and $L_{j}, j=1,\cdots,N$. It is known that in an \textit{infinite} area with \textit{infinite} number of PPP distributed APs, the interference follows an alpha stable distribution~\cite{andrews2011tractable,haenggi2009interference}. However, this is not true for a \emph{finite-area with a finite number of APs}; this necessitates a new analysis technique. Here we present an accurate analysis.

With a fixed bandwidth of $W_{0} = N\bar{W}~\textrm{Hz}$ available to each user, an instantaneous per-user data rate (in b/s) of $R_{\,r_{1}} = W_{0}\log_{2}(1+\mathtt{SINR_{\,r_{1}}})$ is achievable. The rate coverage probability, defined as the probability that the user can achieve a target rate $R_0$, is given by
\begin{equation} \label{eq_4}
\begin{split}
\Px\{R_{\,r_{1}} > R_{0}\} = & \Px\left\{N\bar{W}\log_{2}\left(1+\mathtt{SINR_{\,r_{1}}}\right) >
R_{0}\right\} \\
= & \Px\left\{N \log_{2}\left(1+\mathtt{SINR_{\,r_{1}}}\right) >
R_{0}/\bar{W} \right\} \\
= & \Px\{C_{\,r_{1}} > C_{0}\} = \Px\left\{\mathtt{SINR_{\,r_{1}}} > 2^{C_{0}/N}-1)\right\},
\end{split}
\end{equation}
where we define $C_{\,r_{1}} = N\log_{2}\left(1+\mathtt{SINR_{\,r_{1}}}\right)$ and $C_0 = R_0/\bar{W}$, as the achievable and required spectral efficiencies (in b/s/Hz).

\subsection{User Distance Distributions}

To characterize the signal component of the SINR, we need to obtain the user distance to the nearest AP. The first step is to obtain the unconditional distance CDF from point $\textbf{x}$ to an arbitrary AP randomly placed in the circular finite-area. Let $d \leq R_{\textmd{W}}$ denote the distance of $\textbf{x}$ to the centre of the circular area. The CDF of the distance between point $\textbf{x}$ to an arbitrary AP randomly located in the circular region, independently from other $N-1$ APs, is given by~\cite{khalid2012distance,bettstetter2004connectivity},
\begin{equation} \label{eq_9_1}
\begin{split}
F_{R\,}(r) \! = \Px\{R \leq r\} = \! \left\{
\begin{array}{l l c}
\!\! r^2 / R^{\,2}_{\textmd{W}} & ; & 0 \leq r \leq R_{\textmd{W}}-d \\
\!\! \bar{F}_{R}(r) & ; & \! R_{\textmd{W}} - d  \leq r \leq R_{\textmd{W}} + d \\
\! \! 1 & ; & \!  R_{\textmd{W}} +d \leq r
\end{array}
\right.
\end{split}
\end{equation}
where $\bar{F}_{R}(r)$ is given by
\begin{equation} \label{eq_7}
\begin{split}
&\bar{F}_{R\,}\!(r) \!= \!\frac{1}{\pi}\cos^{-1}\left(\frac{d^{2} \! -\! r^{2}\!+\!R^{\,2}_{\textmd{W}}}{2d\,R_{\textmd{W}}}\right)  +\frac{r^{2}}{\pi R^{\,2}_{\textmd{W}}}
\cos^{-1}\left(\frac{d^{2}\!+\!r^{2}\!-\!R^{\,2}_{\textmd{W}}}{2d\,r}\right) \\
& \hspace{1.5in} -\frac{1}{2\pi R^{\,2}_{\textmd{W}}}
\sqrt{((R_{\textmd{W}}+r)^{2}-d^{2})(d^{2}-(r-R_{\textmd{W}})^{2})}.
\end{split}
\end{equation}

Now that any of the $N$ independent APs has a distance CDF as in~\eqref{eq_9_1}, the CDF of the minimum distance - corresponding to the distance from point $\textbf{x}$ to the closest AP - is given by
\begin{equation} \label{eq_10}
F_{\,R_{\,1}}\!(r_{1}) = 1 - [1- F_{R\,}(\,r_{1})]^N,
\end{equation}
where $r_{\,1}$ is the distance of $\textbf{x}$ to the nearest AP.

It is worth noting that, unlike $r_{1}$, the distances $r_{j},j=2,\cdots,N$ are i.i.d.~RVs, but with distance CDFs that differ from $F_{R_{1}}(r_{1} )$. For a given $r_{1}$, the $N-1$ interfering APs are located in the area between circles with radii $r_{1}$ centred at point $\textbf{x}$ and circle with radius $R_{\textmd{W}}$. Therefore, the conditional CDF of $r_{j}$, given $r_{1}$, is
\begin{equation} \label{eq_11}
\begin{split}
F_{R_{j}|\,r_{1}}(r_{j}) = & \Px\{R \leq r_{j} | R > r_{1}\}=\frac{\Px\{R \leq r_{j} \bigcap R > r_{1}\}}{\Px\{R>r_{1}\}}\\
= &  \left\{
\begin{array}{l l c}
\!\! 0 & ; & r_{j} \leq r_{1}  \\
\!\! (F_{R}(r_{j})-F_{R}(r_{1}))/(1-F_{R}(r_{1})) & ; & \!  r_{1} < r_{j} \leq R_{\textmd{W}} + d \\
\!\! 1 & ; & \!  R_{\textmd{W}} + d < r_{j} \\
\end{array}
\right.
\end{split}
\end{equation}
where $F_{R\,}(\,\cdot \,)$ is given in~\eqref{eq_9_1}.

\section{SINR and User Capacity}\label{sec:SINR}

Using the results in the previous section, we now obtain the user capacity in an interference-limited network, i.e., we first assume that the thermal noise is negligible as compared to the interference and can be hence ignored. While this may be justified in dense small cell networks~\cite{boudreau2009interference}, we then generalize the formulation to include thermal noise.

\subsection{Interference-limited network: $\sigma^2_n = 0$}

\begin{proposition*} \label{thm:DistanceCDF}
The averaged SIR coverage probability (averaged over different realizations of AP locations) at an arbitrary point $\textbf{x}$ within the finite-area network is accurately approximated as
\begin{equation} \label{eq_24}
\textmd{CP}_{\mathtt{SIR}}^{\textmd{\,avg}} (N, d, T, \alpha, \sigma_L) =  \Px\{ \mathtt{SIR} > T \}
\simeq \int_{0}^{R_{\textmd{W}} + d } Q\left( \frac{\ln T - \mu _{\mathtt{SIR}}}{\sigma _{\mathtt{SIR}}} \right) \frac{d F_{R_{1}}(r_{1})}{d\,r_{1}} \,d\,r_{1}.
\end{equation}
where $T$ is a chosen SIR threshold and
\begin{equation} \label{eq_mu_SIR}
\mu _{\mathtt{SIR}} = \ln (\sigma _{s}^{2} /\sqrt{2} )+\ln r_{1} ^{-\alpha} - 2\ln (M_{1} ) + 0.5\ln (M_{2}),
\end{equation}
\begin{equation} \label{eq_sigma_SIR}
\sigma _{\mathtt{SIR}}^{2} = \ln 2+\sigma _{z}^{2} - 2\ln (M_{1} )+\ln (M_{2}),
\end{equation}
with
\begin{equation} \label{eq_21}
M_{1} = (N-1)\sigma _{s}^{2} e^{\sigma _{z}^{2} /2} \left[ (R_{\textmd{W}}+d)^{-\alpha} +  \int_{(R_{\textmd{W}} + d)^{-\alpha} }^{ r_{1}^{-\alpha}} \frac{F_{R}(s_{j}^{-1/\alpha})-F_{R}(r_{1})}{1-F_{R}(r_{1})} \,\textmd{d} s_{j} \right],
\end{equation}
\begin{equation} \label{eq_22}
\begin{split}
M_{2} = & 2(N-1)\sigma _{s}^{4} e^{2\sigma _{z}^{2}} \left[ (R_{\textmd{W}}+d)^{-2\alpha} +  \int_{(R_{\textmd{W}} + d)^{-2\alpha} }^{ r_{1}^{-2\alpha}} \frac{F_{R}(s_{j}^{-1/2\alpha})-F_{R}(r_{1})}{1-F_{R}(r_{1})} \,\textmd{d} s_{j} \right] \\
&+ 4(N-1)(N-2)\sigma _{s}^{4} e^{\sigma _{z}^{2}} \left[ (R_{\textmd{W}}+d)^{-\alpha} +  \int_{(R_{\textmd{W}} + d)^{-\alpha} }^{ r_{1}^{-\alpha}} \frac{F_{R}(s_{j}^{-1/\alpha})-F_{R}(r_{1})}{1-F_{R}(r_{1})} \,\textmd{d} s_{j} \right]^{2}.
\end{split}
\end{equation}
\end{proposition*}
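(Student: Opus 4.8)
The plan is to condition on the nearest-AP distance $r_1$, to approximate both the received signal power and the aggregate interference by lognormal random variables via moment matching, and then to exploit the fact that the quotient (equivalently, the difference of logarithms) of two independent lognormals is again lognormal. Writing $S = \sigma_s^2 |\bar h_1|^2 r_1^{-\alpha} z_1$ for the numerator of \eqref{eq_3} and $I = \sum_{j=2}^N \sigma_s^2 |\bar h_j|^2 r_j^{-\alpha} z_j$ for the interference, we have $\ln \mathtt{SIR}_{\,r_1} = \ln S - \ln I$. If, conditioned on $r_1$, we fit $S \sim \LN(\mu_S,\sigma_S^2)$ and $I \sim \LN(\mu_I,\sigma_I^2)$, then since the desired-link fading and shadowing are independent of all interferer quantities, $\ln \mathtt{SIR}_{\,r_1}$ is approximately Gaussian with mean $\mu_S-\mu_I$ and variance $\sigma_S^2+\sigma_I^2$. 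The conditional coverage probability is then $\Px\{\mathtt{SIR}_{\,r_1} > T \mid r_1\} = \Px\{\ln\mathtt{SIR}_{\,r_1} > \ln T \mid r_1\} \simeq Q\!\left(\frac{\ln T - \mu_{\mathtt{SIR}}}{\sigma_{\mathtt{SIR}}}\right)$, and averaging over the nearest-AP PDF $dF_{R_1}(r_1)/dr_1$ from \eqref{eq_10} yields \eqref{eq_24}.

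First I would pin down the two lognormal parameters of the signal. Since $|\bar h_1|^2$ is a unit-mean exponential variable (so $\Ex|\bar h_1|^2 = 1$, $\Ex|\bar h_1|^4 = 2$) and $z_1 \sim \LN(0,\sigma_z^2)$ (so $\Ex z_1 = e^{\sigma_z^2/2}$, $\Ex z_1^2 = e^{2\sigma_z^2}$), the first two conditional moments of $S$ are $\Ex[S\mid r_1] = \sigma_s^2 r_1^{-\alpha} e^{\sigma_z^2/2}$ and $\Ex[S^2\mid r_1] = 2\sigma_s^4 r_1^{-2\alpha} e^{2\sigma_z^2}$. Matching a lognormal through $\mu_S = 2\ln\Ex[S] - \tfrac12\ln\Ex[S^2]$ and $\sigma_S^2 = \ln\Ex[S^2] - 2\ln\Ex[S]$ gives $\mu_S = \ln(\sigma_s^2 r_1^{-\alpha}/\sqrt{2})$ and $\sigma_S^2 = \ln 2 + \sigma_z^2$, which are exactly the signal-only contributions appearing in \eqref{eq_mu_SIR} and \eqref{eq_sigma_SIR}.

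The main work — and the step I expect to be the principal obstacle — is computing the conditional moments $M_1 = \Ex[I\mid r_1]$ and $M_2 = \Ex[I^2\mid r_1]$. Conditioned on $r_1$, the interferer distances $r_2,\dots,r_N$ are i.i.d.\ with the truncated CDF \eqref{eq_11} and are independent of the i.i.d.\ fading and shadowing, so $M_1 = (N-1)\sigma_s^2 e^{\sigma_z^2/2}\,\Ex[r_j^{-\alpha}\mid r_1]$. Expanding $\Ex[(\sum_j X_j)^2\mid r_1]$ into its $(N-1)$ diagonal second-moment terms (each carrying $\Ex|\bar h_j|^4=2$ together with $\Ex[r_j^{-2\alpha}\mid r_1]$) and its $(N-1)(N-2)$ independent cross terms (each a product of first moments involving $\Ex[r_j^{-\alpha}\mid r_1]$) produces the two lines of \eqref{eq_22}. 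The remaining fractional moments are evaluated by the change of variables $s = r_j^{-\alpha}$ (respectively $s = r_j^{-2\alpha}$): because $r_j^{-\alpha}$ is supported on $[(R_{\textmd{W}}+d)^{-\alpha},\,r_1^{-\alpha}]$, the tail-integral identity $\Ex[r_j^{-\alpha}\mid r_1] = \int_0^\infty \Px\{r_j^{-\alpha} > s\mid r_1\}\,ds$ with $\Px\{r_j^{-\alpha} > s\mid r_1\} = F_{R_j|\,r_1}(s^{-1/\alpha})$ yields the bracketed expressions in \eqref{eq_21} and \eqref{eq_22}, the constant $(R_{\textmd{W}}+d)^{-\alpha}$ accounting for the range where the survival function equals one. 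Care is needed with the integration limits and with this boundary term.

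Finally I would moment-match $I$ to $\LN(\mu_I,\sigma_I^2)$ with $\mu_I = 2\ln M_1 - \tfrac12\ln M_2$ and $\sigma_I^2 = \ln M_2 - 2\ln M_1$, and assemble $\mu_{\mathtt{SIR}} = \mu_S - \mu_I$ and $\sigma_{\mathtt{SIR}}^2 = \sigma_S^2 + \sigma_I^2$, which reproduce \eqref{eq_mu_SIR} and \eqref{eq_sigma_SIR}. The genuinely delicate point is not the algebra but the justification of the two approximations hidden in the ``$\simeq$'': that a finite sum of products of exponential, lognormal, and path-loss terms is well represented by a single lognormal, and that the ratio of the two fitted lognormals may itself be treated as lognormal. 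These are the standard moment-matching heuristics, and their accuracy — rather than being established analytically — is what the Monte-Carlo results of Section~\ref{sec:Sims} are intended to confirm.
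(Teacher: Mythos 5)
Your proposal is correct and follows essentially the same route as the paper's own proof: conditioning on $r_1$, moment-matching the exponential--lognormal numerator and the interference sum to lognormal random variables (your $\mu_S,\sigma_S^2$ and $M_1,M_2$ coincide with the paper's \eqref{eq_15}--\eqref{eq_18}), treating the ratio of the two fitted lognormals as lognormal, and averaging the resulting $Q$-function over the nearest-AP density $dF_{R_1}(r_1)/dr_1$. Your tail-integral evaluation of $\Ex\{r_j^{-\alpha}\mid r_1\}$ is the paper's integration-by-parts identity $\Ex\{Y\}=yF_{Y}(y)\vert_a^b-\int_a^b F_{Y}(y)\,\textmd{d}y$ in disguise, with the boundary term $(R_{\textmd{W}}+d)^{-\alpha}$ handled identically, so the bracketed expressions in \eqref{eq_21}--\eqref{eq_22} come out the same way.
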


\begin{proof}
The achieved SIR at an arbitrary point $\textbf{x}$, given $r_{1}$, is written as
\begin{equation} \label{eq_14}
\mathtt{SIR}_{\,r_{1}} =\frac{\sigma _{s}^{2} \left|\bar{h}_{{\kern 1pt} 1} \right|^{2} r_{1}^{-\alpha } z_{1} }{\sum _{j=2}^{N}\sigma _{s}^{2} \left|\bar{h}_{j} \right|^{2} r_{j}^{-\alpha } z_{j}} =\frac{\omega _{1} {\kern 1pt} z_{1} }{\sum _{j=2}^{N}\omega _{j} {\kern 1pt} z_{j}  }   ,
\end{equation}
where the interference in the denominator is a linear combination of $N-1$ lognormal RVs $z_{j}; j=2,\cdots,N$ with coefficients $\omega _{j};j=2,\cdots,N$, which themselves are independent RVs. The key to simplifying this expression is to use the fact that, as shown in~\cite{pratesi2006generalized}, for many applications, linear combinations of lognormal random variables can be closely approximated by a single lognormal random variable. The work in~\cite{pratesi2006generalized} presents several such approximations based on a generalization of the MMA approach. For example, by matching the first and second moments, the denominator in~\eqref{eq_14} can be modeled as $\mathtt{SIR}_{\textmd{Denom}} \sim \LN(\mu _{\textmd{Denom}} , \sigma _{\textmd{Denom}})$, with
\begin{equation} \label{eq_mu_I}
\mu _{\textmd{Denom}} = \mu _{I_{r_{1}}} =2\ln (M_{1} )-0.5\ln (M_{2}),
\end{equation}
\begin{equation} \label{eq_sigma_I}
\sigma _{\textmd{Denom}}^{2} =\sigma _{I_{r_{1}}}^{2} =-2\ln (M_{1} )+\ln (M_{2}),
\end{equation}
where
\begin{equation} \label{eq_17}
M_{1} =\sum _{j=2}^{N} \Ex\{\omega _{j} \} \exp (\mu _{z_{j} } +\sigma _{z_{j} }^{2} /2),
\end{equation}
\begin{equation} \label{eq_18}
\begin{split}
M_{2} = & \sum _{j=2}^{N}\Ex\{ \omega _{j}^{2} \}  \exp {\kern 1pt} (2\mu _{z_{j} } + 2\sigma _{z_{j} }^{2} ) \\
 & +\sum _{j=2}^{N}\sum _{j'= 2 \, ; \, j'\ne j}^{N}\Ex\{ \omega _{j}^{} \} \Ex\{ \omega _{j'}^{} \} \exp ( \mu _{z_{j}} +\mu _{z_{j'} } +(\sigma _{z_{j} }^{2} +\sigma _{z_{j'} }^{2} )/2).
 \end{split}
\end{equation}

Let $s_{j} = r^{-\alpha}_{j}$. The CDF of $s_{j}$, given $r_{1}$, is obtained from $F_{R_{j} \mid \,r_{1}}(r_{j})$ as $F_{S_{j}\,\mid \,r_{1}}(s_{j}|\,r_{1}) = 1 - F_{R_{j}\,\mid \,r_{1}}(s^{-1/\alpha}_{j})$. Thus, we have $F_{S_{j}\,\mid \,r_{1}}((R_{\textmd{W}}+d)^{-\alpha}) = 0$, and $F_{S_{j}\,\mid \,r_{1}}(r^{-\alpha}_{1}) = 1$. Since the average of a random variable $Y$ can be obtained from the CDF of $Y$ as $\Ex \{Y\}=y F_{Y}(y)\mid^{b}_{a}-\int^{b}_{a} F_{Y}(y)\textmd{d}y$, where $a$ and $b$ are the values at which $F_{Y}(a)=0$ and $F_{Y}(b)=1$, we get
\begin{equation} \label{eq_19}
\begin{split}
\Ex\left\{r_{j}^{-\alpha} \mid r_{1} \right\} & = \Ex\left\{s_{j} \mid r_{1} \right\} = r^{-\alpha}_{1} -  \int _{(R_{\textmd{W}} + d)^{-\alpha} }^{ r_{1}^{-\alpha}} F_{S_{j}\,\mid \,r_{1}}(s_{j})  \,\textmd{d} s_{j} \\
& = (R_{\textmd{W}}+d)^{-\alpha} +  \int_{(R_{\textmd{W}} + d)^{-\alpha} }^{ r_{1}^{-\alpha}} \frac{F_{R}(s_{j}^{-1/\alpha})-F_{R}(r_{1})}{1-F_{R}(r_{1})} \,\textmd{d} s_{j}.
\end{split}
\end{equation}

Similarly,
\begin{equation} \label{eq_20}
\Ex\left\{r_{j}^{-2\alpha} \mid r_{1} \right\}  = (R_{\textmd{W}}+d)^{-2\alpha} +  \int_{(R_{\textmd{W}} + d)^{-2\alpha} }^{ r_{1}^{-2\alpha}} \frac{F_{R}(s_{j}^{-1/2\alpha})-F_{R}(r_{1})}{1-F_{R}(r_{1})} \,\textmd{d} s_{j}.
\end{equation}

Therefore,~\eqref{eq_17}-\eqref{eq_18} can be rewritten as~\eqref{eq_21}-\eqref{eq_22}. The integrals in~\eqref{eq_21}-\eqref{eq_22} can be easily evaluated numerically. It is worth noting that since we do not have an ordering in the interfering APs, for a given $r_{1}$, all the random coefficients $\omega_j,j=2,\cdots,N$ have equal mean and standard deviation as implied from~\eqref{eq_19}-\eqref{eq_20}. Therefore, since all the components in $\sum_{j=2}^{N} \omega _{j} z_{j}$ have equal mean and standard deviation, the MMA approach provides a good approximation for the summations of lognormals~\cite{mehta2007approximating,di2009further} (the accuracy is verified via simulations below).

The numerator in~\eqref{eq_14}, on the other hand, is a scaled lognormal RV (with a random scaling having exponential distribution), not a linear combination of lognormals; however, for the analytical tractability, we still use the MMA technique to approximate the numerator as a lognormal RV given as, $\mathtt{SIR}_{\textmd{Num}} \sim \LN(\mu_{\textmd{Num}},\sigma_{\textmd{Num}})$, with
\begin{eqnarray}
\mu_{\textmd{Num}} & = & 2\ln (\beta_{1} )-0.5\ln (\beta _{2} )=\ln (\sigma _{s}^{2} /\sqrt{2} )+\ln r_{1} ^{-\alpha }, \label{eq_15} \\
\sigma_{\textmd{Num}}^{2} & = & -2\ln (\beta_{1} )+\ln (\beta_{2} )=\ln 2+\sigma _{z}^{2}  \label{eq_16}.
\end{eqnarray}
As we will see, this approximation is remarkably accurate.

With numerator and denominator both modeled as lognormal RVs, the achieved SIR, conditioned on the connection distance $r_{1}$ between the user and the closest AP to the user, is also a lognormal random variable, $\mathtt{SIR}_{\,r_{1} } \sim \LN(\mu _{\mathtt{SIR}} ,\sigma _{\mathtt{SIR}} )$ with $\mu _{\mathtt{SIR}} =\mu _{\textmd{Num}} -\mu _{\textmd{Denom}}$ and $\sigma _{\mathtt{SIR}}^{2} =\sigma _{\textmd{Num}}^{2} +\sigma _{\textmd{Denom}}^{2}$ given in~\eqref{eq_mu_SIR} and~\eqref{eq_sigma_SIR}, respectively. Having found an approximate distribution of the SIR as a lognormal random variable, the conditional SIR coverage probability (conditioned on the distance $r_{1}$) is
\begin{equation} \label{eq_23}
\Px\{ \mathtt{SIR}_{\,r_{1}} > T \} = Q\left( \frac{\ln T - \mu _{\mathtt{SIR}}}{\sigma _{\mathtt{SIR}}} \right),
\end{equation}
Finally, by averaging over distance $r_{1}$, we obtain the result.
\end{proof}

Correspondingly, from~\eqref{eq_4}, the average user capacity coverage probability follows as
\begin{equation} \label{eq_25}
\begin{split}
\textmd{CP}_{C}^{\textmd{\,avg}} (N, d, C_{0}, \alpha, \sigma_L) = & \Px\{ C > C_{0} \} \\
= & \int_{0}^{R_{\textmd{W}} + d } Q\left( \frac{\ln (2^{C_{0}/N} -1) - \mu _{\mathtt{SIR}}}{\sigma _{\mathtt{SIR}}} \right) \frac{d F_{R_{1}}(r_{1})}{d\,r_{1}} \,d\,r_{1}.
\end{split}
\end{equation}

Finally, using the fact that for a positive random variable $X$, $\Ex \{X\} = \int_{t \geq 0} \Px \{X >t\}\,\textmd{d}\,t$, we can use the capacity coverage probability to obtain the ergodic capacity as
\begin{equation} \label{eq_26}
{C}_{\textmd{\,ergodic}} (N, d,\alpha, \sigma_L)=\int_{C_0 \geq 0} \textmd{CP}_{C}^{\textmd{\,avg}} (N, d, C_{0}, \alpha, \sigma_L)\, \textmd{d}C_{0}.
\end{equation}

Although~\eqref{eq_24}-\eqref{eq_26} do not explicitly show the resulting dependence on the AP density, $\lambda$, they can be rewritten as functions of $\lambda$ by using the substitution $N = \pi R_{\textmd{W}}^{2} \lambda$. As we will see in the next section, simulations show that in the asymptotic case of a large number of APs (large AP density), the values obtained using~\eqref{eq_24}-\eqref{eq_26} converge to those of an infinite network (or dense network) given in~\cite{andrews2011tractable,dhillon2012modeling}.

\subsection{Network with thermal noise: $\sigma^2_n \neq 0$}

When including thermal noise, the instantaneous achievable SINR is given by~\eqref{eq_3}. The noise term, $\sigma^2_n$, in the denominator of~\eqref{eq_3} is deterministic, and therefore it is (trivially) a lognormal RV with mean $\mu _{\sigma _{n}^{2} } =\ln \sigma _{n}^{2} $ and zero variance ($\sigma _{\sigma _{n}^{2} }^{2} =0$) for the associated Gaussian RV, $\ln (\sigma _{n}^{2} )$. Now, the denominator of SINR becomes the addition of two lognormal RVs $\sigma _{n}^{2} \sim \LN(\mu _{\sigma _{n}^{2}} ,{\kern 1pt} {\kern 1pt} \sigma _{\sigma _{n}^{2} } )$ and $I_{r_{1}} \sim \LN(\mu _{I_{r_{1}}} ,\sigma _{I_{r_{1}}})$. Consequently, as we apply the MMA technique in order to approximate the addition of these two lognormal RVs with another lognormal RV, the denominator of SINR can be modelled as
\begin{equation} \label{eq_27}
\mathtt{SINR}_{\textmd{Denom}} \sim \LN(\mu _{\sigma _{n}^{2} +I_{r_{1}}} , \sigma _{\sigma _{n}^{2} +I_{r_{1}}})
\end{equation}
with $\mu _{\sigma _{n}^{2} +I_{r_{1}}} =2\ln (\bar{M}_{1} )-0.5\ln (\bar{M}_{2})$, and $\sigma _{\sigma _{n}^{2} +I_{r_{1}}}^{2} =-2\ln (\bar{M}_{1} )+\ln (\bar{M}_{2} )$, where
\begin{eqnarray} \label{eq_28}
\bar{M}_{1} & = & e^{\ln \sigma _{n}^{2} } +e^{\mu _{I_{r_{1}}} + \, \sigma _{I_{r_{1}} }^{2} /\, 2}, \\
\bar{M}_{2} & = & e^{2\ln \sigma _{n}^{2}} +e^{2\mu _{I_{r_{1}}} + 2\sigma _{I_{r_{1}} }^{2}} +2\sigma _{n}^{2} e^{\mu _{I_{r_{1}}} + \sigma _{I_{r_{1}}}^{2} /2}. \label{eq_29}
\end{eqnarray}

Therefore, as before, since both the numerator and denominator are approximated as lognormal RVs, the achieved SINR is a lognormal RV for a given $r_{1}$. We get, $\mathtt{SINR}_{\,r_{1} } \sim \LN(\mu _{\mathtt{SINR}} ,\sigma _{\mathtt{SINR}} )$ with $\mu _{\mathtt{SINR}} =\mu _{\textmd{Num}} -\mu _{\textmd{Denom}}$ and $\sigma _{\mathtt{SINR}}^{2} =\sigma _{\textmd{Num}}^{2} +\sigma _{\textmd{Denom}}^{2}$. Therefore, the SINR/capacity coverage probability averaged over different realizations of AP locations is obtained from~\eqref{eq_24}-\eqref{eq_26} by substituting  $\mu _{\mathtt{SIR}}$ and $\sigma _{\mathtt{SIR}}$ with $\mu _{\mathtt{SINR}}$ and $\sigma _{\mathtt{SINR}}$, respectively.

\subsection{Worst-case Point}

In Appendix A we show that, for small values of noise variance, the worst-case SINR and hence user capacity occurs at the center of the circular area $\textmd{W}$, i.e., when $d = 0$. In this case, some of the expressions provided in the previous sub-sections can be simplified. With $d = 0$, the CDF of the distance between a user at the center to an arbitrary AP randomly located in the circular region is $F_{R\,}(r) = r^2 / R^{\,2}_{\textmd{W}}$. Therefore, the CDF and PDF of $r_{1}$ become
\begin{eqnarray}
F_{R_{1}}(r_{1}) & = & 1-\left(1-\frac{r_{1}^{2} }{R_{\textmd{W}}^{\,2}}\right)^{N}, \label{eq_30} \\
f_{R_{1}}(r_{1})& = & \frac{d F_{R_{1}}(r_{1})}{dr_{1}} =\frac{2Nr_{1}}{R_{\textmd{W}}^{\,2}} \left(1-\frac{r_{1}^{2}}{R_{\textmd{W}}^{\,2}} \right)^{N-1} ~~, ~0 \le r_{1} \le R_{\textmd{W}}. \label{eq_31}
\end{eqnarray}

Now, for a given $r_{1} $, the distribution of the $N-1$ interfering APs in the area between circles centred at the origin with radii $r_{1} $ and $R_{\textmd{W}}$, denoted as $\textmd{B}$, is that of $(N-1)$ i.i.d. random points $(x_{j} ,y_{j} ){\kern 1pt}  , j=2,\cdots,N$, uniformly distributed in $\textmd{B}$ with common distribution $f_{x_{j} ,y_{j} } (x_{j} ,y_{j} )=1/S(\textmd{B})=1/\pi (R_{\textmd{W}}^{2} -r_{1}^{2} ){\kern 1pt} {\kern 1pt} $ expressed in Cartesian coordinates. With the change of variable $x_{j} =r_{j} \cos \theta _{j} $ and $y_{j} =r_{j} \sin \theta _{j} $, and then integrating over the resulting uniform distribution in $\theta _{j} $, $0\le \theta _{j} \le 2\pi $, the distance PDF of an individual AP location, for a given value of $r_1$, is given by
\begin{equation} \label{eq_32}
f_{R_{j} \mid r_{1}} (r_{j})=\frac{2{\kern 1pt} r_{j} }{R_{\textmd{W}}^{\,2}-r_{1}^{2}} {\kern 1pt} {\kern 1pt} {\kern 1pt} {\kern 1pt} ,~~ r_{1} \le r_{j} \le R_{\textmd{W}},\, \, \, \, j=2,\cdots, N,
\end{equation}
and zero elsewhere. Therefore, we have
\begin{equation} \label{eq_33}
\Ex\left\{r_{j}^{-\alpha} | r_{1} \right\}=\int _{r_{1} }^{R_{\textmd{W}} }r_{j}^{-\alpha } f_{R_{j} \mid r_{1}} (r_{j})\,dr_{j} =\frac{2(r_{1}^{-\alpha +2} -R_{\textmd{W}}^{-\alpha +2} )}{(\alpha -2)(R_{\textmd{W}}^{\,2} -r_{1}^{2} )},
\end{equation}
\begin{equation} \label{eq_34}
\Ex\left\{ r_{j}^{-2\alpha } | r_{1} \right\}=\int _{r_{1} }^{R_{\textmd{W}} }r_{j}^{-2\alpha } f_{R_{j} \mid r_{1}} (r_{j}) {\kern 1pt} dr_{j} =\frac{2(r_{1}^{-2\alpha +2} -R_{\textmd{W}}^{-2\alpha +2} )}{(2\alpha -2)(R_{\textmd{W}}^{\,2} -r_{1}^{2} )}.
\end{equation}
and so Eqs.~\eqref{eq_17}-\eqref{eq_18} simplify to
\begin{eqnarray}
M_{1} & = & \frac{2(N-1)\sigma _{s}^{2} e^{\sigma _{z}^{2} /2}}{\alpha -2} \left(\frac{r_{1}^{-\alpha +2} -R_{\textmd{W}}^{-\alpha +2} }{R_{\textmd{W}}^{\,2} -r_{1}^{2} } \right), \\
M_{2} &= & \frac{4(N-1)\sigma _{s}^{4} e^{2\sigma _{z}^{2} } }{2\alpha -2} \left(\frac{r_{1}^{-2\alpha +2} -R_{\textmd{W}}^{-2\alpha +2} }{R_{\textmd{W}}^{\,2} -r_{1}^{2} } \right) \nonumber \\
& & \hspace*{0.3in} +\frac{4(N-1)(N-2)\sigma _{s}^{4} e^{\sigma _{z}^{2} } }{(\alpha -2)^{2} } \left(\frac{r_{1}^{-\alpha +2} -R_{\textmd{W}}^{-\alpha +2} }{R_{\textmd{W}}^{\,2} -r_{1}^{2} } \right)^{2}.
\end{eqnarray}
The rest of the expressions remain unchanged.

It is worth noting that, in general, no closed-form expression is available for the integrations in~\eqref{eq_24}-\eqref{eq_26} as a function of $\alpha$. However, a tractable analysis is possible for specific integer values of $\alpha$. As an example for the popular value of $\alpha=4$~\cite{andrews2011tractable}, we show in Appendix B that an accurate analytical approximation for the worst-case ergodic user capacity in an interference-limited network is obtained as (for $N>2$)
\begin{equation} \label{eq_35_new}
 \begin{split}
 C_{\textmd{\,ergodic}}^{\textmd{\,worst}} \approx \frac{N}{\ln 2} & \bigg( \gamma +\frac{1}{2N} +\ln \frac{N(N-2)^{1/2} }{(N-1)^{3/2} } + \frac{1}{2} \ln \frac{N-2}{N-1} +\frac{1}{2} \ln \bigg[1+\frac{2e^{\sigma _{z}^{2} } }{3(N-2)} \bigg] \\
 & + \left. \bigg(\big(1+\frac{2e^{\sigma _{z}^{2} } }{3(N-2)} \big)^{N} -1 \bigg) \bigg(\ln \bigg[\frac{1+1.5e^{-\sigma _{z}^{2} } (N-2)}{N-1} \bigg]-\frac{1}{2(N-1)} \bigg)\right)
 \end{split}
\end{equation}
where $\gamma =0.578$ is the Euler-Mascheroni constant~\cite{havil2003exploring} and $\sigma _{z}^{2} =(0.1\ln 10)^{2} \sigma _{L}^{2}$. Using similar approximations, closed-form expressions can also be obtained for other integer values of $\alpha \geq 2$.

Further analysis is possible to investigate how quickly the performance approaches the worst case as $d$ becomes small compared to $R_{\textmd{W}}$. We show in Appendix C that, for $d \ll R_{\textmd{W}}$, the change in averaged SIR (expressed in dB) compared to the worst case obtained at the centre of an interference-limited network with $\alpha = 3.87$, can be closely approximated by a $3^{\textrm{rd}}$ order polynomial function of $d$.

\section{Numerical Results}\label{sec:Sims}

In general, the CP results obtained in the previous section depend on various parameters of the network. In this section, we simulate two typical examples of $\alpha =3$, and $\alpha =3.87$. Results can also be given for other values of $\alpha$, but the example provided here is sufficient to illustrate the approach. A circular finite-area downlink with radius $R_{\textmd{W}} =1~\textmd{km}$ is considered and the transmit signal power of each AP corresponds to $\sigma _{s}^{2} =20~\textmd{dBm}$; we note that the parameter values in the simulations are just for illustration purposes, and any other values only scale the results.


\subsection{Interference-limited network: $\sigma^2_n = 0$}

In Section~\ref{sec:SINR}, a series of approximations were used to derive the expressions for the SIR/SINR coverage probability (or capacity coverage probability). Thus, it is important to validate the approximations. The 2D plot of the averaged SIR coverage probability, for the target SIR of $0~\textmd{dB}$, obtained within the area of the circle is given in Fig.~\ref{Fig_2D}. In this example, we set $\lambda = 1~\textmd{APs/km}^2$, $\alpha =3.87$, and $\sigma_L = 6~\textmd{dB}$. At this low density, the network is interference limited if the transmit power is high. Figure~\ref{Fig_2D}-(a) presents results obtained from the analysis developed here while Fig.~\ref{Fig_2D}-(b) presents the results from Monte Carlo simulations. The analysis accurately approximates the exact results obtained from simulations within an average error of 5\%. The difference between~\ref{Fig_2D}-(a) and \ref{Fig_2D}-(b) is due to the discretization and number of iterations chosen for the simulations.
\begin{figure}[t]
 \begin{center}
 \subfigure[Results from analysis]{\includegraphics[width = .471\textwidth]{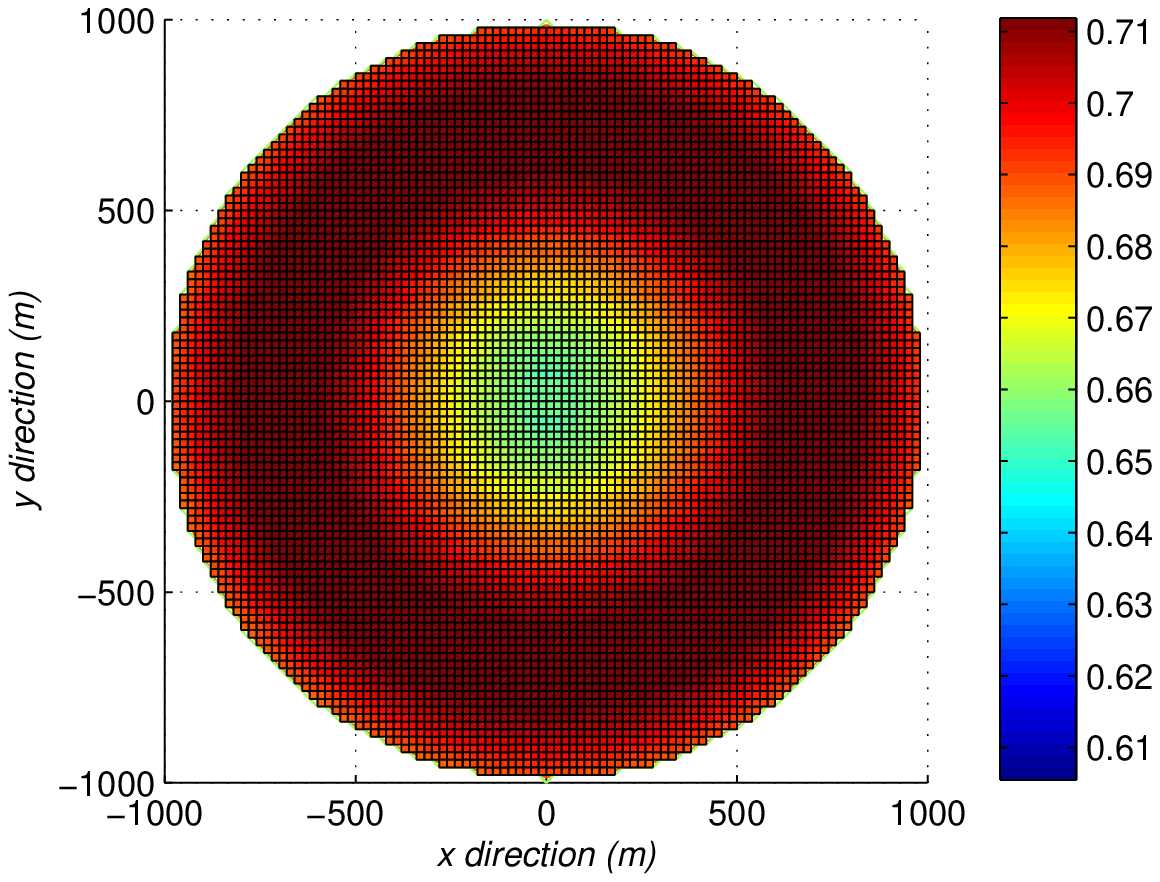}}
 \subfigure[Results from simulations]{\includegraphics[width = .471\textwidth]{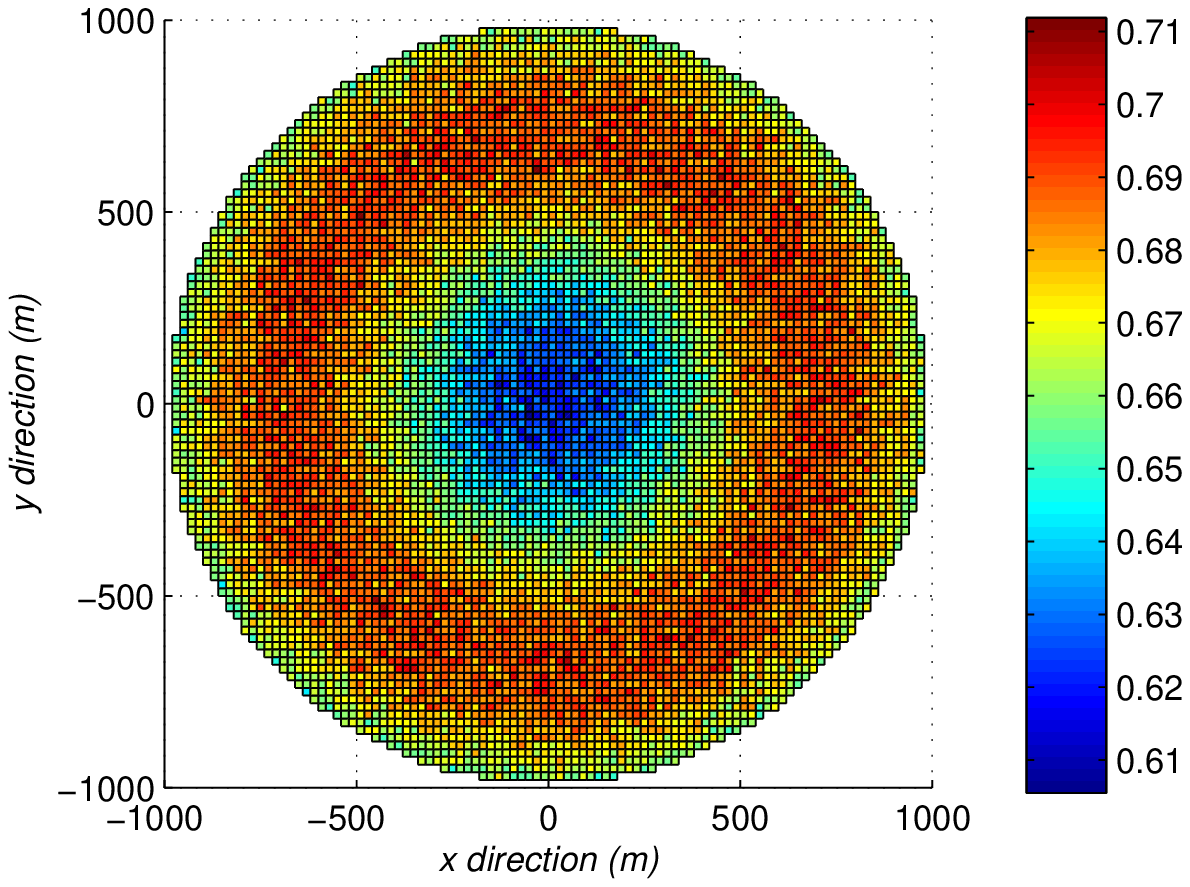}}
 \caption{SIR coverage probability (for the target SIR of $0~\textmd{dB}$) in a circular finite-area interference-limited network with $\lambda = 1~\textmd{APs/km}^2$, $\alpha =3.87$, and $\sigma_L = 6~\textmd{dB}$.}
 \label{Fig_2D}
 \end{center}
\end{figure}
Since, by averaging over AP locations, the coverage probability in a circular area is independent of angle, it is enough to evaluate the results along any radial line. Figure~\ref{Fig_3} illustrates the approximate averaged SIR coverage probability, for the target SIR of $0~\textmd{dB}$, along a radial line of the circle. The results are illustrated for the two examples of $\sigma_L = 0~\textmd{dB}$ (no shadowing) and $\sigma_L = 6~\textmd{dB}$. The figure plots the results for different values of AP density. The results from Monte Carlo simulations are also included in the figure (the dotted lines in the figure).
\begin{figure}[t]
 \begin{center}
 \subfigure[$\alpha = 3$, $\sigma_L = 0~\textmd{dB}$]{\includegraphics[width = .49\textwidth]{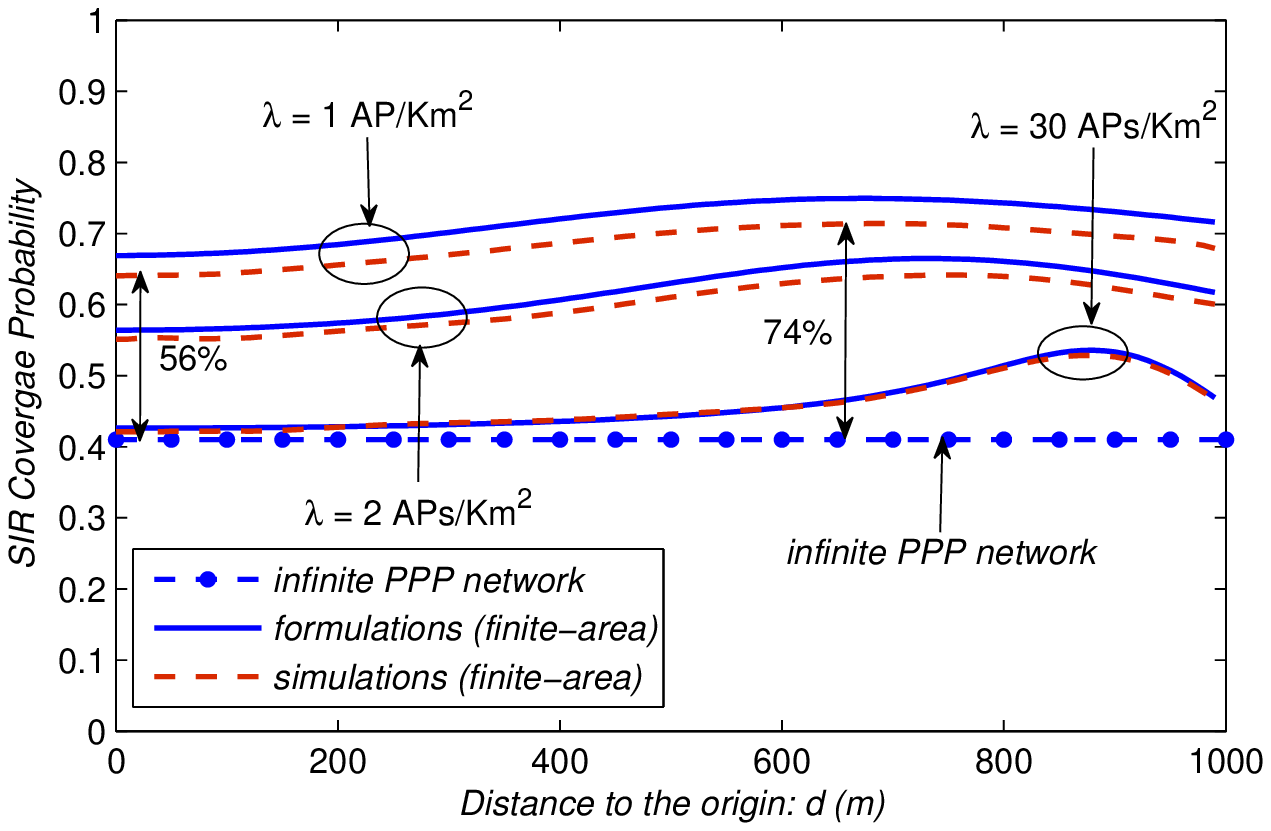}}
 \subfigure[$\alpha = 3$, $\sigma_L = 6~\textmd{dB}$]{\includegraphics[width = .49\textwidth]{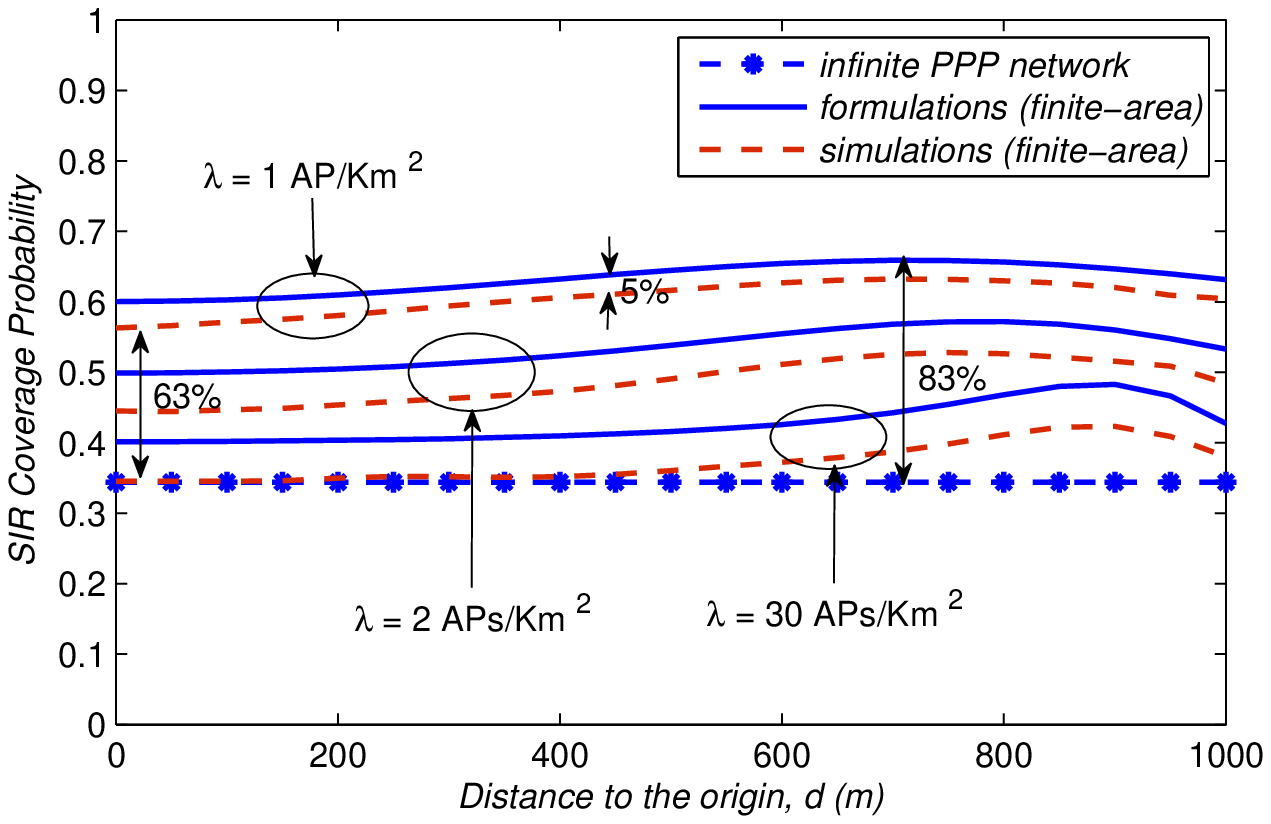}}
 \subfigure[$\alpha = 3.87$, $\sigma_L = 0~\textmd{dB}$]{\includegraphics[width = .49\textwidth]{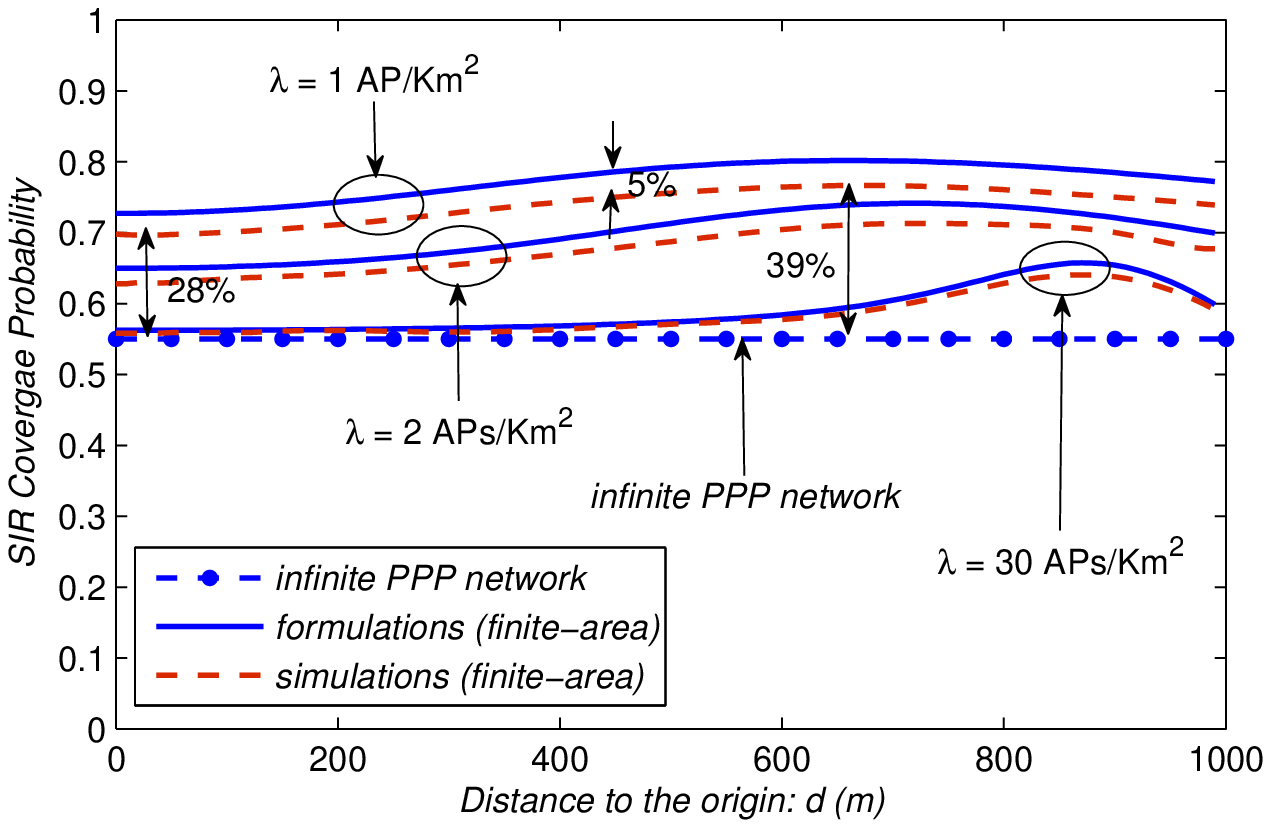}}
 \subfigure[$\alpha = 3.87$, $\sigma_L = 6~\textmd{dB}$]{\includegraphics[width = .49\textwidth]{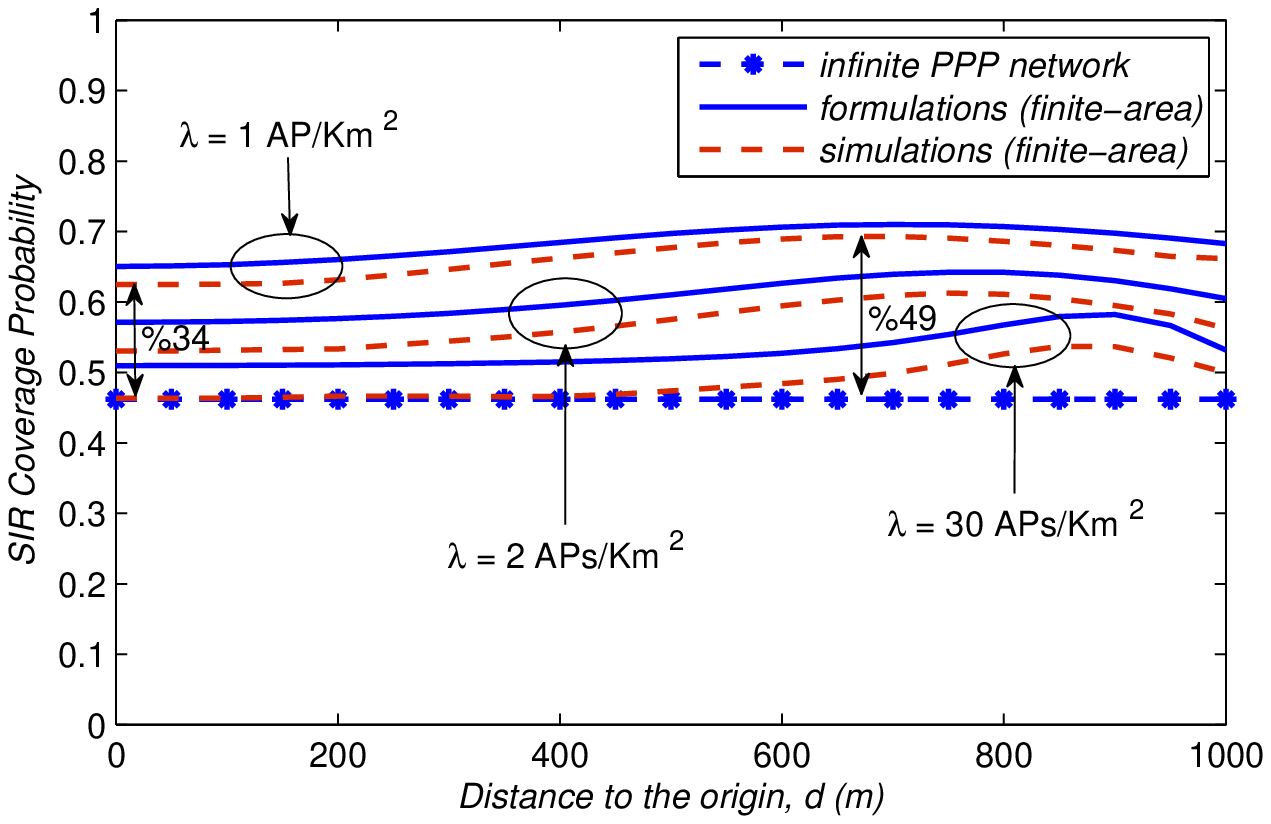}}
 \caption{SIR coverage probability (for the target SIR of $0~\textmd{dB}$) along the radius of the circular finite-area interference-limited network with AP densities $\lambda = 1~\textmd{APs/km}^2$, $\lambda = 2~\textmd{APs/km}^2$ and $\lambda = 30~\textmd{APs/km}^2$ for: a) $\alpha =3$, $\sigma_L = 0~\textmd{dB}$; b) $\alpha =3$, $\sigma_L = 6~\textmd{dB}$; c) $\alpha =3.87$, $\sigma_L = 0~\textmd{dB}$; d) $\alpha =3.87$, $\sigma_L = 6~\textmd{dB}$.}
 \label{Fig_3}
 \end{center}
 \end{figure}

As is clear from the figure, the analysis of the previous section capture the behavior of the system quite well. For low-density networks ($\lambda = 1~\textmd{AP/km}^2$) the analytical results are within 5\% of the simulated results while the error reaches 9.5\% for highly dense networks ($\lambda = 30~\textmd{AP/km}^2$) under moderate to high values of shadowing standard deviation.

Of importance is the significant differences in the coverage probability between low-density networks and the asymptotic case available in the literature (the infinite PPP network curve). As has been reported earlier~\cite{vijayandran2012analysis},  in an interference-limited network, the results of an infinite network underestimates the SIR coverage probability for small to moderate values of AP density. In particular, in this example in Fig.~\ref{Fig_3} with $\sigma_L = 6~\textmd{dB}$, the SIR coverage probability at the worst-case point (at the origin) of a circular network with $\lambda = 1~\textmd{APs/km}^2$ and $\alpha = 3.87$, outperforms that of dense network by 34\%. The improvement in the SIR coverage probability increases to 63\% under the PLE of $\alpha = 3$. It is characterizing this difference that has motivated this paper.

It is worth noting that the improvement in SIR in a low-density network depends heavily on the PLE, shadowing standard deviation and the location under consideration. For example, for the target SIR of $0~\textmd{dB}$ under no shadowing, an increase of at least 28\% in SIR coverage probability is obtained in an interference-limited network with $\alpha = 3.87$ and $\lambda = 1 ~\textmd{APs/km}^2$ as compared to a dense network with $\lambda = 30~\textmd{APs/km}^2$ (56\% improvement with $\alpha = 3$). It is also interesting to note that for users near the edge (large $d$), even this curve deviates from the simulation results for the high-density network; this is because infinite-area PPP networks inherently cannot account for edge effects.

As is seen from Figs.~\ref{Fig_2D} and~\ref{Fig_3}, for a given AP density, the circular finite-area network experiences a peak in SINR CP at a certain distance $d$ from the center. In addition, for small values of noise variance, the worst-case SINR and/or user capacity occurs at the center of circular region $\textmd{W}$. Please refer to Appendix A for the explanation of these behaviours. In particular, the worst-case point is of particular interest in parametric studies for network design since it can be directly related to a coverage constraint. Therefore, in most of the simulations below, we focus on the worst-case point.

We further justify the accuracy of the presented formulations for different sizes of the finite-area via Fig.~\ref{Fig_SIR_radius}. The figure illustrates the worst SIR coverage probability (for the target SIR of $0~\textmd{dB}$) in a low-density and a highly-dense network for the example of $\sigma_L = 6~\textmd{dB}$. Here, the accuracy of the presented formulations decreases with the size of the finite-area in a low-density network and the error reaches $9.5\%$  of the simulation results in large networks. In addition, Fig.~\ref{Fig_SIR_radius} illustrates two interesting behaviors. First, in a highly-dense interference-limited network, the SIR performance does not change with the size of the network. This effect reflects the fact the results of a highly dense network can be closely approximated by the results obtained in an infinite-area network - a result which was also reported earlier in~\cite{mao2012towards}. Second, the sensitivity of the network to the AP density decreases with the size of the network to such an extent that a large network becomes insensitive to the AP density. This effect matches the results previously obtained in~\cite{andrews2011tractable} that the SIR performance in an interference-limited infinite-area PPP network does not depend on the AP density.
\begin{figure}[t]
 \begin{center}
 \includegraphics[width = .5\textwidth]{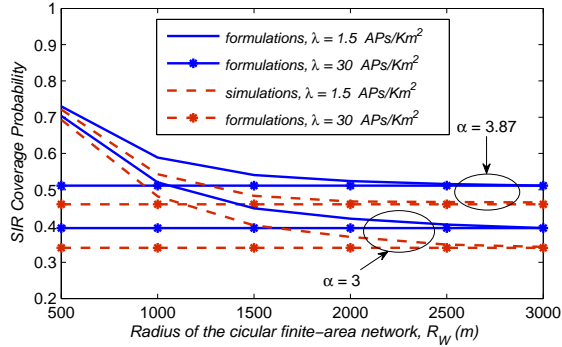}
 \caption{The effect of network radius size on the worst SIR coverage probability. The results are shown for a circular interference-limited network with $\lambda = 1.5~\textmd{AP/km}^2$ and $\lambda = 30~\textmd{APs/km}^2$ under two PLEs of $\alpha = 3$ and $\alpha = 3.87$ and $\sigma_L = 6~\textmd{dB}$.}
 \label{Fig_SIR_radius}
 \end{center}
 \end{figure}

\subsection{Network with thermal noise: $\sigma^2_n \neq 0$}

The previous results were for an interference-limited network where we ignored thermal noise. We  next  determine  the  accuracy  of  the analysis with respect to the noise variance. Figure~\ref{Fig_5} shows the effect of noise variance on the SINR coverage probability obtained at the centre of a low-density circular network for $\alpha = 3.87$. As expected, the SINR coverage probability degrades with an increase in the noise variance.
\begin{figure}[t]
 \begin{center}
 \subfigure[$\sigma_L = 0~dB$]{\includegraphics[width = .49\textwidth]{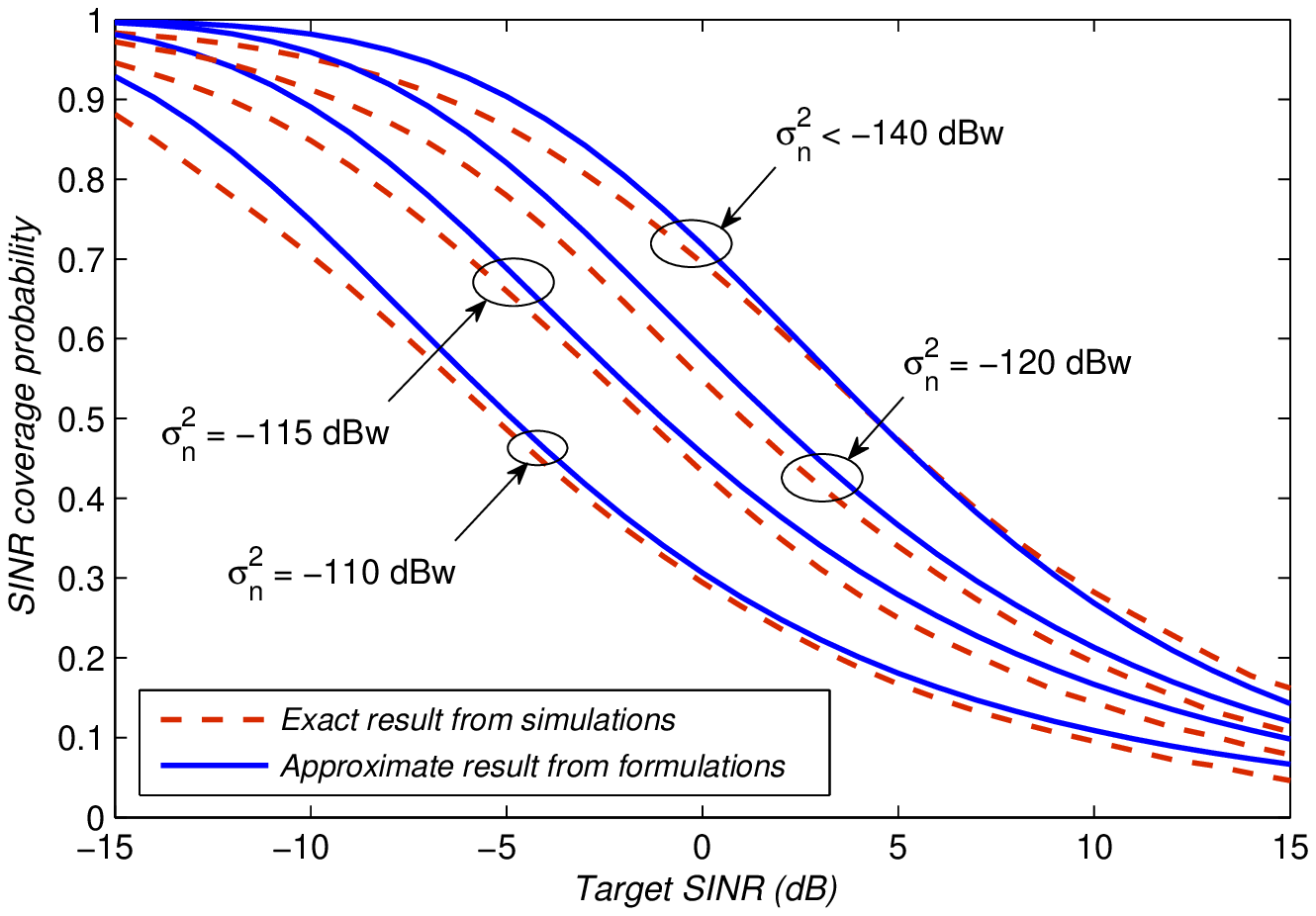}}
 \subfigure[$\sigma_L = 6~dB$]{\includegraphics[width = .49\textwidth]{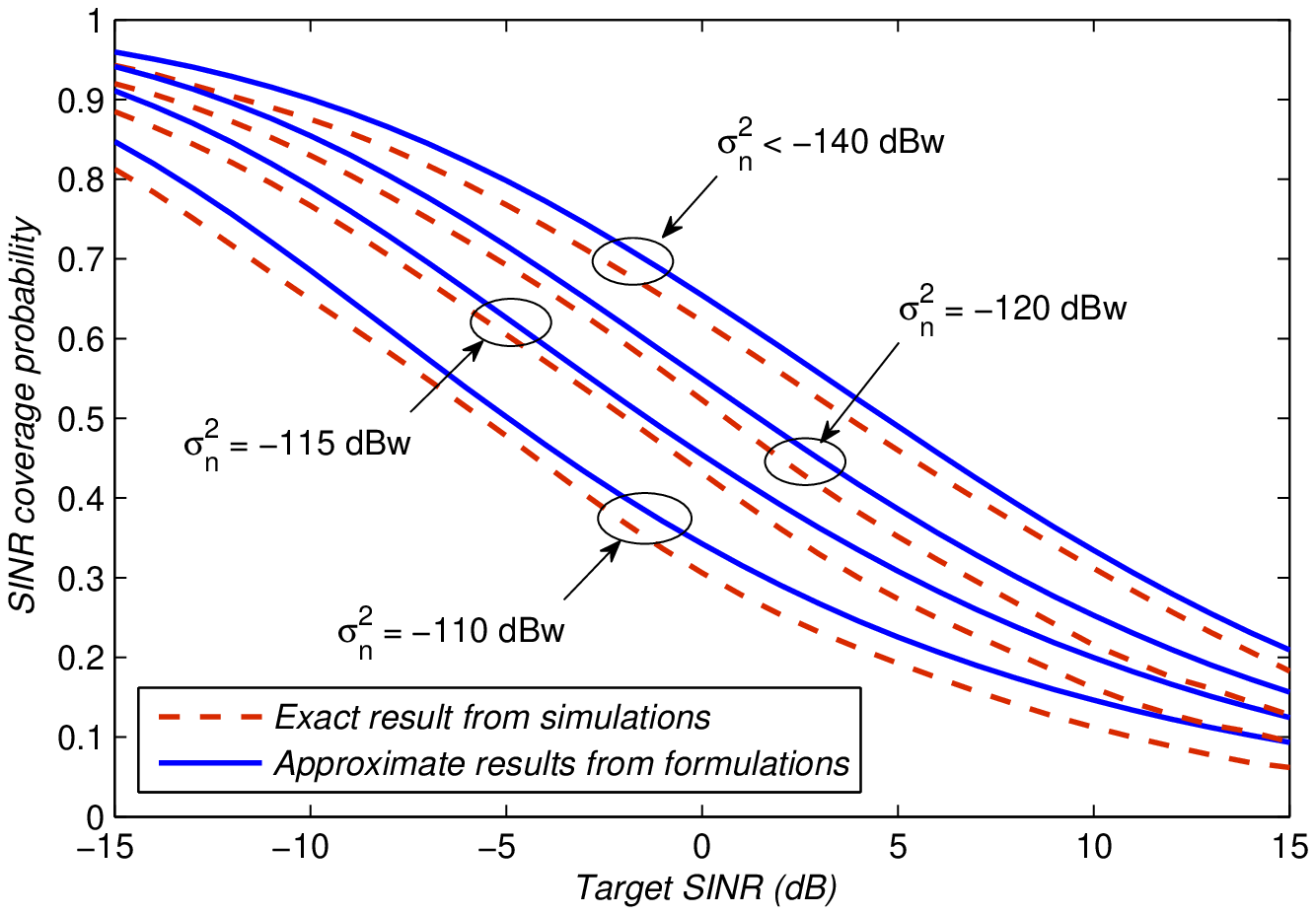}}
 \caption{The effect of noise variance on the SINR coverage probability of a low-density finite-area network with AP density $\lambda = 1~\textmd{APs/km}^2$ with PLEs of $\alpha =3.87$ for: a) $\sigma_L = 0~dB$ ; b) $\sigma_L = 6~dB$}
 \label{Fig_5}
 \end{center}
\end{figure}
The results from formulations are within 9\% for all values of noise variances. For typical values of noise variance in practice ($\sigma _{n}^{2} \le -100{\kern 1pt}~\textmd{dBm}$) the accuracy of the presented formulation is within 5\%. Importantly, in all cases, the analysis captures the behavior of the system. Therefore, the presented $\textmd{CP}_{\mathtt{SINR}}^{\textmd{\,avg}}$ expression is accurate for practical values of $\sigma_{n}^{2}$.

The effect of AP density on the SINR CP (for the target SINR of $0~\textmd{dB}$) is illustrated in Fig.~\ref{Fig_6} for different values of noise variances. The dotted lines in the figure correspond to the SINR CP obtained at the worst-case point and the solid lines are the corresponding results for the maximum\footnote{In general, no closed-form analytical expression is available for the maximum of SINR CP from the presented analytical formulations, so we use computer simulations to compute the maximum achievable SINR CP along the radius of the circle using the integral expression in~\eqref{eq_24}.} of achievable SINR CP within the finite-area.
\begin{figure}[t]
 \begin{center}
 \subfigure[$\sigma_L = 0~dB$]{\includegraphics[width = .49\textwidth]{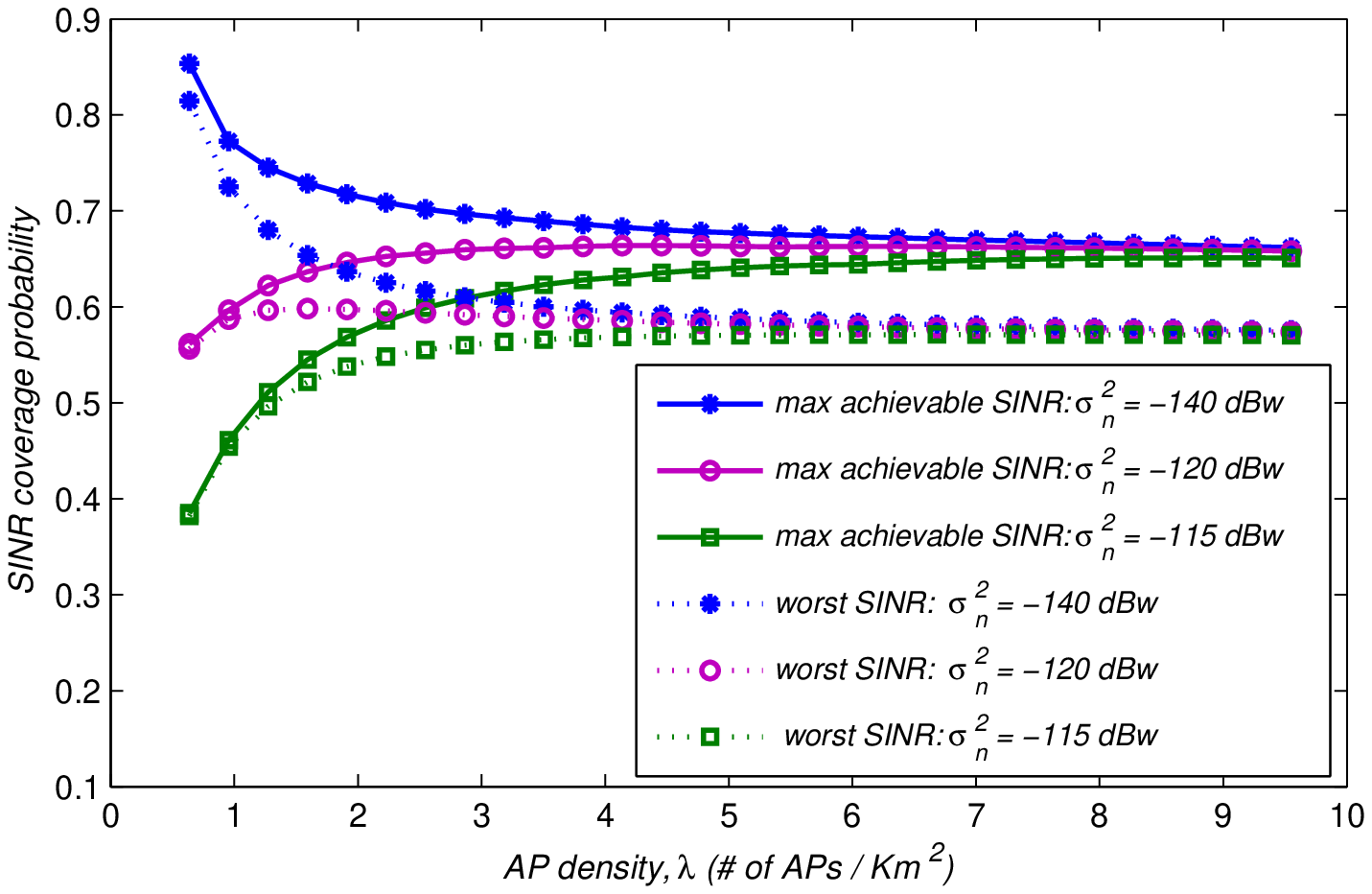}}
 \subfigure[$\sigma_L = 6~dB$]{\includegraphics[width = .49\textwidth]{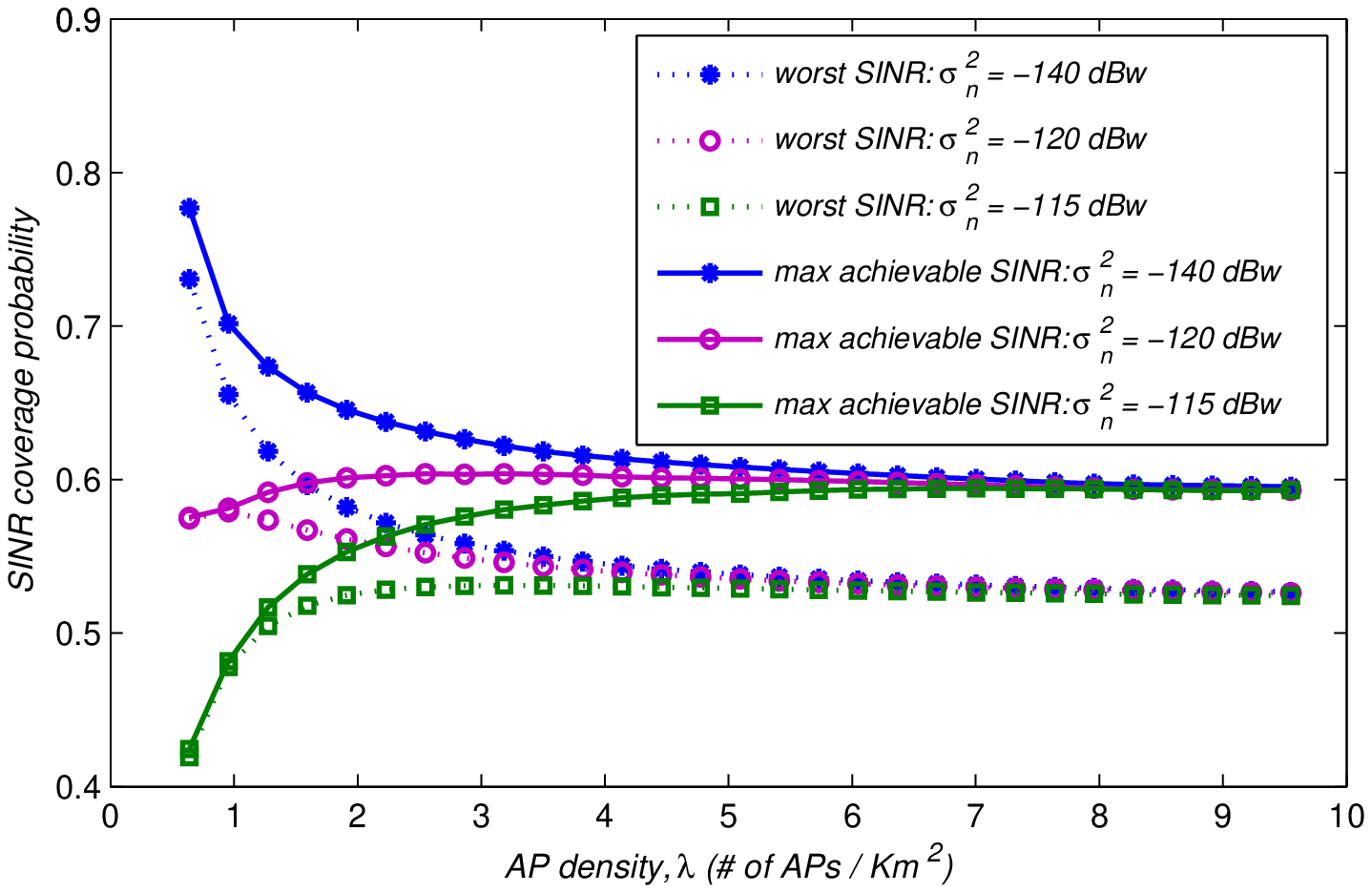}}
 \caption{The effect of AP density on the worst and maximum achievable SINR coverage probability (for the target SINR of $0~\textmd{dB}$) in a finite-area network for different values of noise variances with PLEs of $\alpha =3.87$ and: a) $\sigma_L = 0~dB$ ; b) $\sigma_L = 6~dB$}
 \label{Fig_6}
 \end{center}
\end{figure}
The different trends in SINR CP under different values of noise variance can be explained as follows.

In general, the increase in AP density (or $N$) causes the distance PDF $f_{R_{1}}(r_{1})$ to become narrower (as seen in Appendix A, in Fig.~\ref{Appendix_I_Fig_1}-(b)). As a result, the average received signal power increases with $\lambda$. The interference power also increases with $\lambda$. In a noisy network with moderate to high values of noise variance, the noise power dominates the interference power. Therefore, the increase in signal power causes the SINR CP to increases with $\lambda$, to the extent that, for large AP densities the interference power dominates the noise power. As a result, the behavior of the system under consideration converges to that of an interference-limited network. On the other hand, in an interference-limited network (very small values of noise variance), the interference dominates the thermal noise for any $\lambda$. It turns out that, the impact of interference power is more than that of signal power causing the SINR CP to degrade with $\lambda$. Finally, when increasing $\lambda$, the SINR CP converges to that in the infinite network case, which is also interference limited.

The effect of AP density on the SINR performance of the finite-area network can be further investigated by defining the ``transmit SNR" as the ratio of the transmit power to the noise variance, $\mathtt{SNR}_{\,t} =\sigma _{s}^{2} /\sigma _{n}^{2}$. Figure~\ref{Fig_color_contour} illustrates the contour plot and the color plot of the SINR CP (for the target SINR of $0~\textmd{dB}$) obtained based on different values of transmit SNR and AP density in a finite-area network with $\alpha =3.87$ and $\sigma_L = 6~\textmd{dB}$. As is seen from the contour plot in Fig.~\ref{Fig_color_contour}-{a}, there exists a very small range of $\mathtt{SNR}_t$ ($108.5~\textmd{dB} \lesssim \mathtt{SNR}_t \lesssim 109.5~\textmd{dB}$), for which two AP densities would yield the same SINR CP for a chosen value of $\mathtt{SNR}_t$ . In other words, in this region, for a chosen $\mathtt{SNR}_t$, it is possible to obtain an optimal AP density in terms of received SINR CP (see the color plot in Fig.~\ref{Fig_color_contour}-{b}).

\begin{figure}[t]
 \begin{center}
 \subfigure[$\textmd{Contour plot}$]{\includegraphics[width = .45\textwidth]{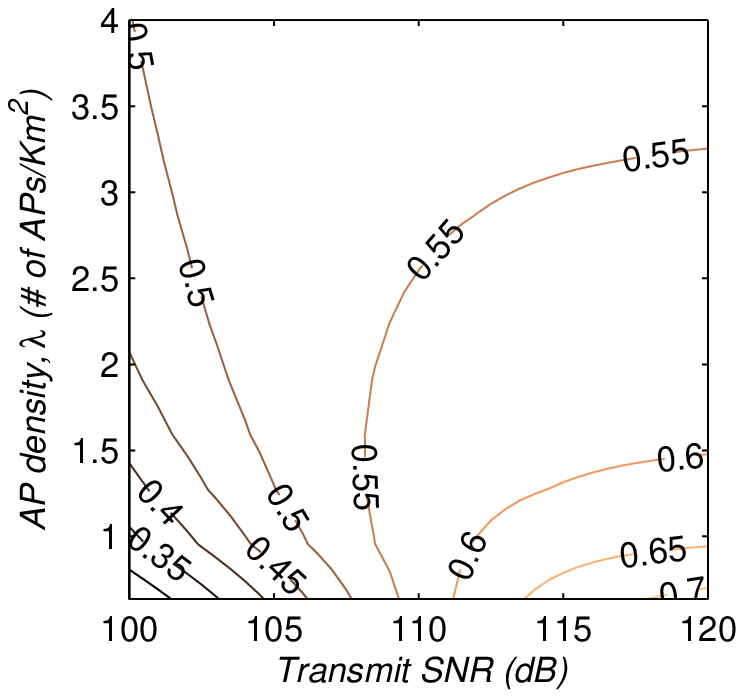}}
 \subfigure[$\textmd{Color plot}$]{\includegraphics[width = .45\textwidth]{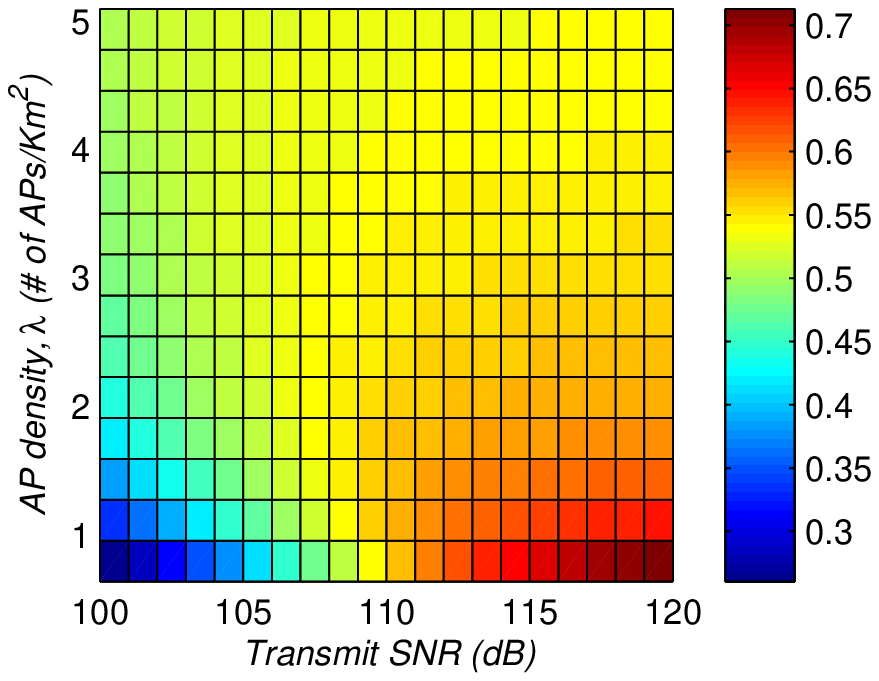}}
 \caption{The contour plot and the color plot of the worst SINR CP (for the target SINR of 0 dB) obtained based on different values of transmit SNR and AP density in a finite-area network with $\alpha =3.87$ and $\sigma_L = 6~\textmd{dB}$}
 \label{Fig_color_contour}
 \end{center}
\end{figure}

The relative behaviour of the highly-dense network as compared to a low-density network is further investigated in Fig.~\ref{Fig_7}. Fig.~\ref{Fig_7} illustrates the effect of the transmit SNR on the coverage probability at the center for the target SINR of $0~\textmd{dB}$ in a network with no shadowing. Again, the dotted lines represent the simulation results while the solid lines represent the analytical expression. As is clear from the figure, the infinite-area assumption (that matches the results in a dense network) underestimates low-density network performance for $\mathtt{SNR}_{\,t} \gtrsim 108~\textmd{dB}$ for the PLE of $\alpha = 3.87$. With a transmit signal power of $\sigma _{s}^{2} =20~\textmd{dBm}$, this corresponds to a noise variance of $\sigma _{n}^{2} \le -88{\kern 1pt}~\textmd{dBm}$ which, clearly, is common in practice. In a network with $\alpha= 3$, the transmit SNR threshold decreases to $\mathtt{SNR}_{\,t} \simeq 83~\textmd{dB}$. Nevertheless, in general, the range of transmit SNR for which the highly-dense network outperforms or falls behind a low-density network in SINR performance depends heavily on their relative AP densities and $\alpha$.

\begin{figure}[t]
 \begin{center}
 \includegraphics[width = .55\textwidth]{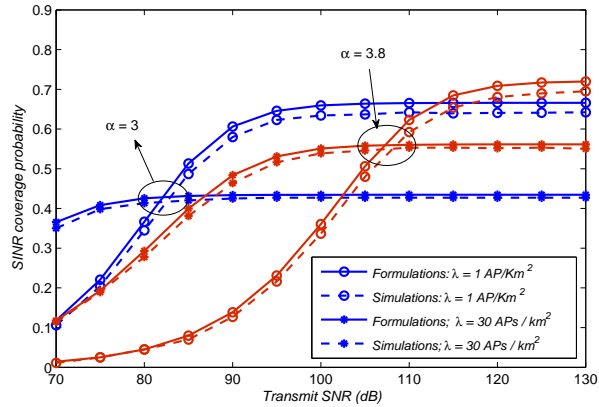}
 \caption{The effect of transmit SNR on the worst SINR coverage probability (for the target SINR of $0~\textmd{dB}$) in finite-area networks with AP densities $\lambda = 1~\textmd{AP/km}^2$ and $\lambda = 30~\textmd{APs/km}^2$ under no shadowing for the two PLEs of $\alpha = 3$ and $\alpha = 3.8$.}
 \label{Fig_7}
 \end{center}
 \end{figure}

Although the achievable SINR with respect to AP density depends heavily on the value of transmit SNR (decreases in an interference-limited network or increases in a noisy network with AP density), the pre-log factor $N$ in the user capacity formula ($C_{\,r_{1}} = N\log_{2}\left(1+\mathtt{SINR_{\,r_{1}}}\right)$) means that the user capacity increases monotonically with AP density irrespective of the noise variance. Figure~\ref{Fig_8} illustrates this effect on the worst (at the center) and maximum achievable user capacity coverage probability (for the target capacity of $C_{0}=5~\textmd{b/s/Hz}$)) within a circular finite-area network. As seen from Fig.~\ref{Fig_6} and Fig.~\ref{Fig_8}, even a highly dense finite-area network does not experience a uniform performance all through the region, rather, there is always a peak in the performance typically near the edges. As before, the infinite-area assumption does not
fully capture the behaviour of a finite-area network.

\begin{figure}[t]
\begin{center}
\subfigure[maximum user capacity coverage probability]{\includegraphics[width = .49\textwidth]{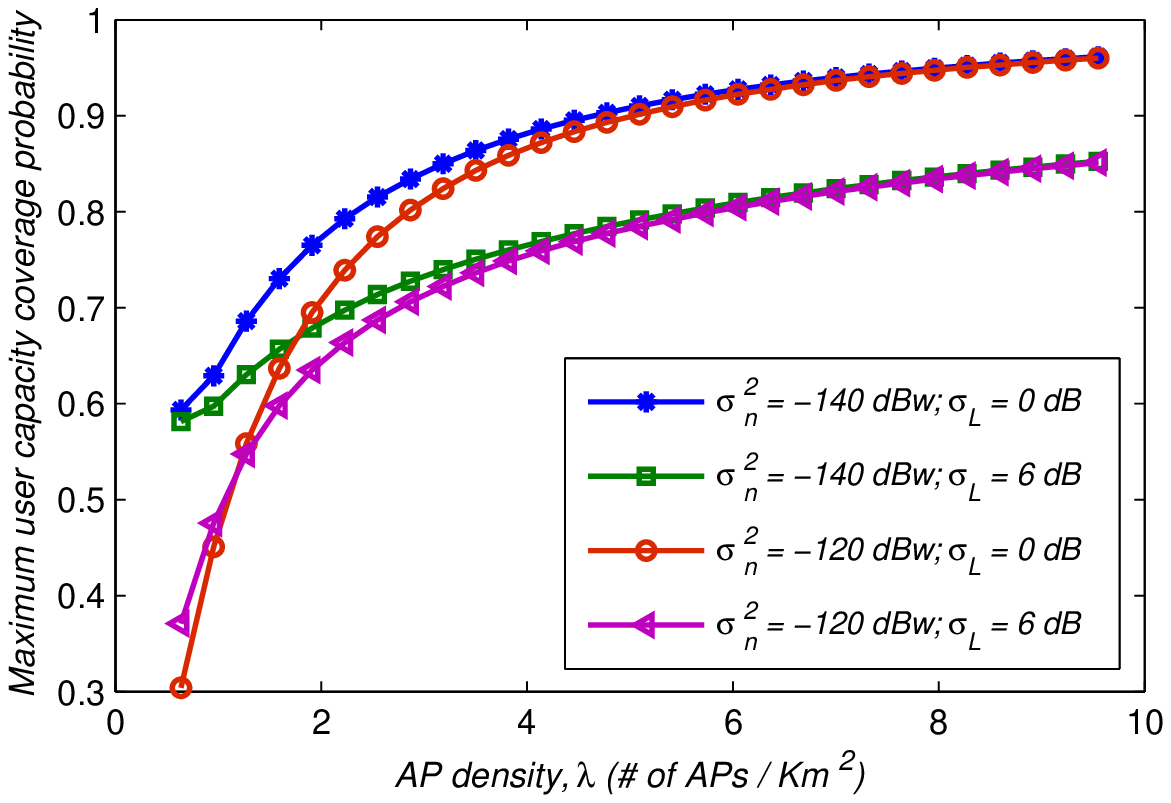}}
\subfigure[worst user capacity coverage probability]{\includegraphics[width = .49\textwidth]{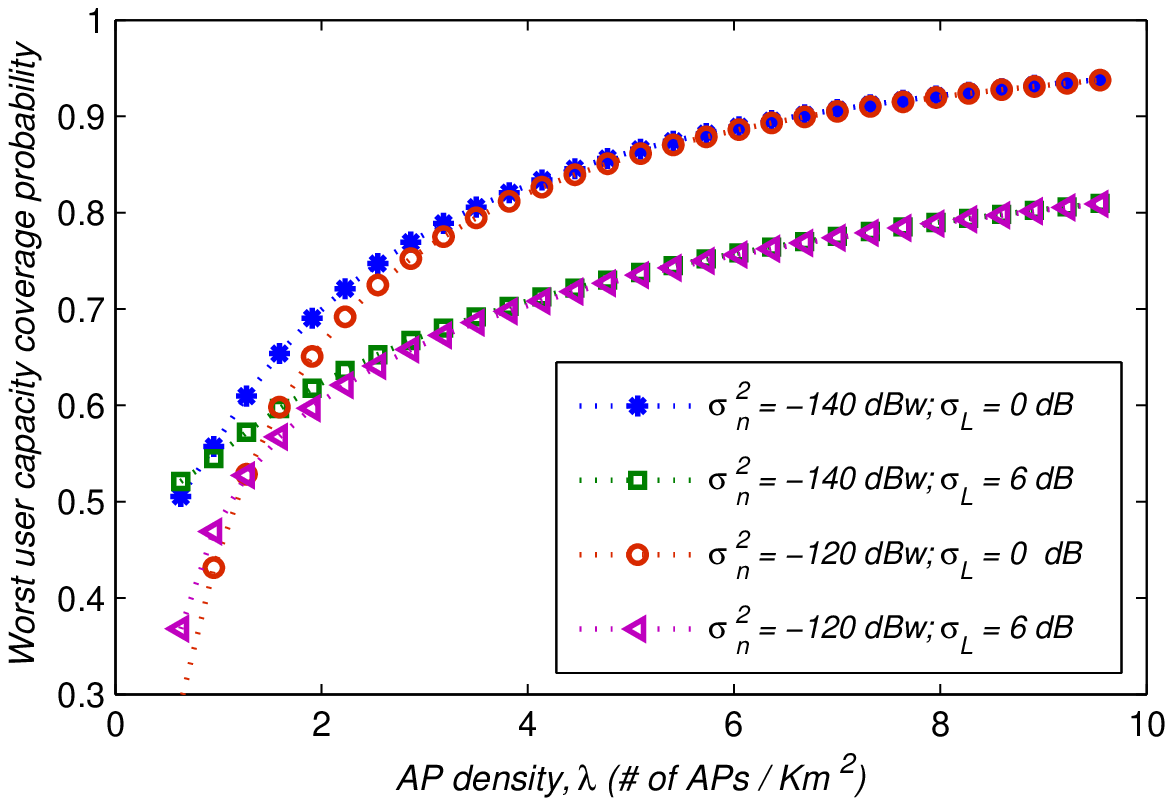}}
\caption{The effect of AP density on the a) maximum and b) worst achievable user capacity coverage probability (for the target capacity of $C_{0}=5~\textmd{b/s/Hz}$) in a finite-area network with PLEs of $\alpha =3.87$.}
\label{Fig_8}
\end{center}
\end{figure}
\begin{figure}[ht]
\begin{center}
 \includegraphics[width = .55\textwidth]{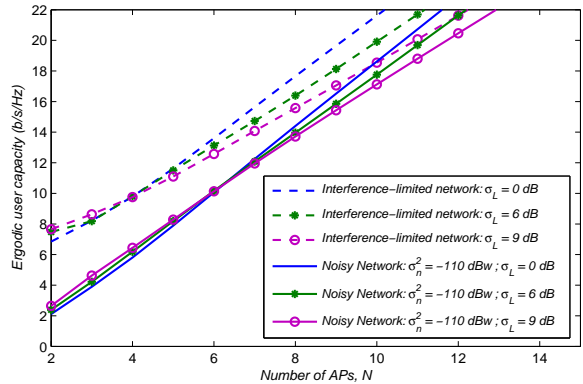}
\caption{The required number of APs versus target value of ergodic capacity for different values of $\sigma_{L}$ in a finite-area network with $\alpha =3.87$}
\label{Fig_9}
\end{center}
\end{figure}

\subsection{Design Example}

By using the worst-case user capacity at the centre of the circular network, we are able to answer the question as to how many APs are needed to guarantee a required target value of capacity in the network. For example, in the discussion associated with Fig.~\ref{Fig_8}-b, we chose a capacity coverage probability threshold of 0.6, i.e., we require that a user at any point in the network is able to achieve a capacity of $C_0=5 ~ \textmd{b/s/Hz}$ with probability 0.6. For such a requirement to be satisfied in an interference-limited finite-area network with $\sigma_L = 6~\textmd{dB}$, the network requires a minimum AP density of $\lambda = 1.59~\textmd{APs/Km}^2$. In a circular finite-area with radius $R_{\textmd{W}}= 1~\textmd{km}$, this corresponds to a minimum of $N = 5$ APs within the network. Under the same requirement, an interference-limited network with no shadowing requires one less AP as compared to the network with $\sigma_L = 6~\textmd{dB}$.

The design can be carried out for a target ergodic capacity as well. Figures~\ref{Fig_9} illustrates the relationship between the worst achievable ergodic user capacity and the number of APs $N$ for different values of $\sigma_L$ with $\alpha =3.87$. For a target value of capacity, Fig.~\ref{Fig_9} suggests a larger number of APs required for more severe shadowing environments. Moreover, the approximately linear relationship between the worst average user capacity and the number of APs is clear from the figures. This effect is expected beforehand from the model under consideration where the bandwidth allocated to each user grows with $N$.

Finally, Figure~\ref{Fig_ergodic} compares the approximate ergodic user capacity results obtained from the analytical formulation in~\eqref{eq_35_new} with those from the integration expression in~\eqref{eq_26} as well as the exact results from simulations for the two examples of $\sigma_L = 0~\textmd{dB}$ and $\sigma_L = 6~\textmd{dB}$ in an interference-limited network with $\alpha =4$. As is clear, there is a close match between the approximate and actual results (the approximate results are always within $10\%$ of the actual capacity for any $N$).
\begin{figure}[ht]
 \begin{center}
 \subfigure[$\sigma_L = 0~dB$]{\includegraphics[width = .48\textwidth]{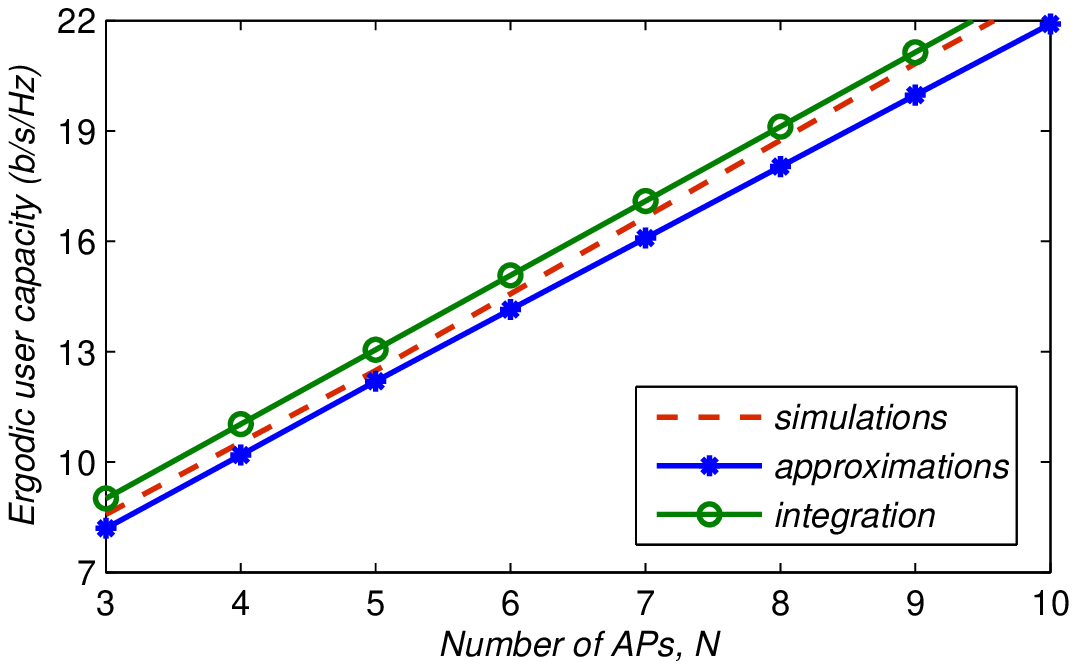}}
 \subfigure[$\sigma_L = 6~dB$]{\includegraphics[width = .48\textwidth]{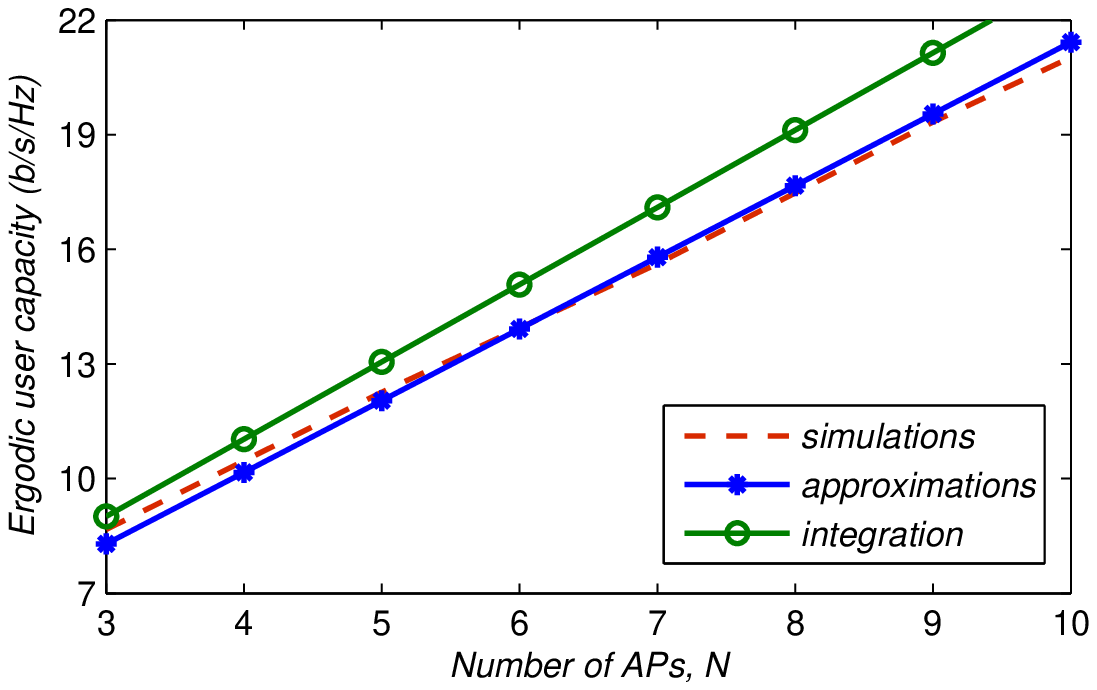}}
 \caption{Comparison between the approximate ergodic user capacity results obtained from the analytical formulation in~\eqref{eq_35_new} with those from the integration expression in~\eqref{eq_26} as well as the exact results from simulations in an interference-limited network with $\alpha =4$ for the two examples of: a) $\sigma_L = 0~\textmd{dB}$ ; b) $\sigma_L = 6~\textmd{dB}$}
 \label{Fig_ergodic}
 \end{center}
\end{figure}
It is worth noting that, Eq.~\eqref{eq_35_new} provides an approximation for a lower-bound on the integral~\eqref{eq_26}. Therefore, the fact that in Fig.~\ref{Fig_ergodic}-(b),~\eqref{eq_35_new} outperforms the numerical integration~\eqref{eq_26} when compared to the exact results, is purely by accident. For instance, as is seen from Fig.~\ref{Fig_ergodic}-(a), the above fact does not hold for $\sigma_L=0~\textmd{dB}$.

\section{Summary and Conclusions}\label{sec:Conc}

This paper analyzes the effect of AP density on the performance of a \emph{finite-area} network with a \emph{finite number} of uniformly distributed APs. Our motivation is two-fold - the available analyses in the literature are asymptotic and do not apply in the case of low-density networks and/or near edges of the finite area. As traditional cellular networks make way for newer network architectures, considering such a finite-area model is important for a better understanding of network capabilities and limitations. To further our analysis we obtain the achievable SINR coverage probability and the user capacity coverage probability at any point of the finite-area network. For practical values of "transmit SNR", $\mathtt{SNR}_{\,t} \ge 110~ \textmd{dB}$, the presented results are within 5\% of the actual results obtained from simulations.

The analysis also provides the specific loss in performance due to noise. In an interference-limited network, the SIR decreases monotonically with the number of APs, however, this is not the case when thermal noise is accounted for. In a finite-area network with a moderate noise variance, the SINR increases with $N$ (or AP density) and converges to the SINR of the interference-limited dense network. It has been reported earlier that an  infinite-area network underestimates the performance of a low-density interference-limited network. Correspondingly, the formulations allow a network designer to quantify the gain (loss) in performance from low values of AP density in an interference-limited network (and a noisy network) as compared to the highly-dense network. In particular, for the target SIR of $0~\textmd{dB}$ under no shadowing, an increase of at least 28\% in SIR coverage probability is obtained in an interference-limited network with $\alpha = 3.87$ and $\lambda = 1 \textmd{APs/km}^2$ as compared to a dense network with $\lambda = 30~\textmd{APs/km}^2$. The gain in SIR coverage probability increases to 56\% in an environment with $\alpha = 3$.

The formulation here accounts for different PLEs and network parameters, so they can lend themselves to parametric studies for network design. As an example of a parametric design, the worst-case user capacity coverage probability or average user capacity expression can be used to find the required number of APs (or AP density) to maintain the capacity at all points of the network above a target value.

\appendices
\section{Worst-Case Point}

The worst-case SINR, and hence capacity, is said to occur at the center of the circular finite-area network for small values of noise variance. In this appendix, we justify this claim. For a given point in $\textmd{W}$ with the associated distance $r_{\,1}$ to its nearest AP, the averaged SINR coverage probability is obtained as
\[\textmd{CP}_{\mathtt{SINR}}^{\textmd{\,avg}} (N, d, T, \alpha, \sigma_L) = \int_{0}^{R_{\textmd{W}} + d} \Px\{ \mathtt{SINR}_{\,r_{1}} > T \} f_{R_{\,1}}(r_{1})\,d\,r_{1}\]
where $\Px\{ \mathtt{SINR}_{\,r_{1}} > T \} \simeq Q\left( (\ln T - \mu _{\mathtt{SINR}})/\sigma _{\mathtt{SINR}} \right) $ and $f_{R_{\,1}}(r_{1})$ are the conditional SINR coverage probability (conditioned on $r_{\,1}$) and the PDF of the distance to the nearest AP, respectively. Due to the severe nonlinearities in $\Px\{ \mathtt{SINR}_{\,r_{1}} > T \}$ and $f_{R_{\,1}}(r_{1})$, the analytical proof for obtaining the location of the worst-case point is intractable. As an alternative, we resort to an intuitive explanation and simulation as to illustrate the worst-case average SINR occurs at the centre.

\begin{figure}[ht]
 \begin{center}
 \subfigure[]{\includegraphics[width = .49\textwidth]{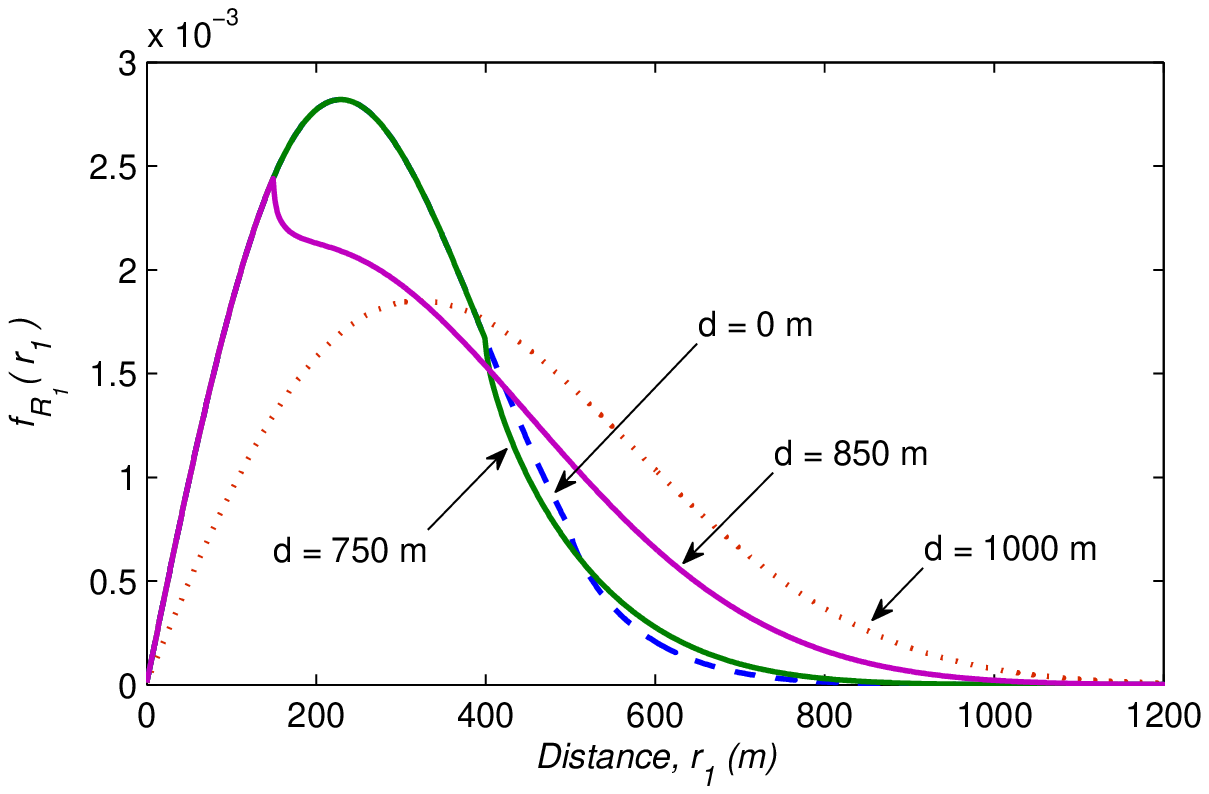}}
 \subfigure[]{\includegraphics[width = .49\textwidth]{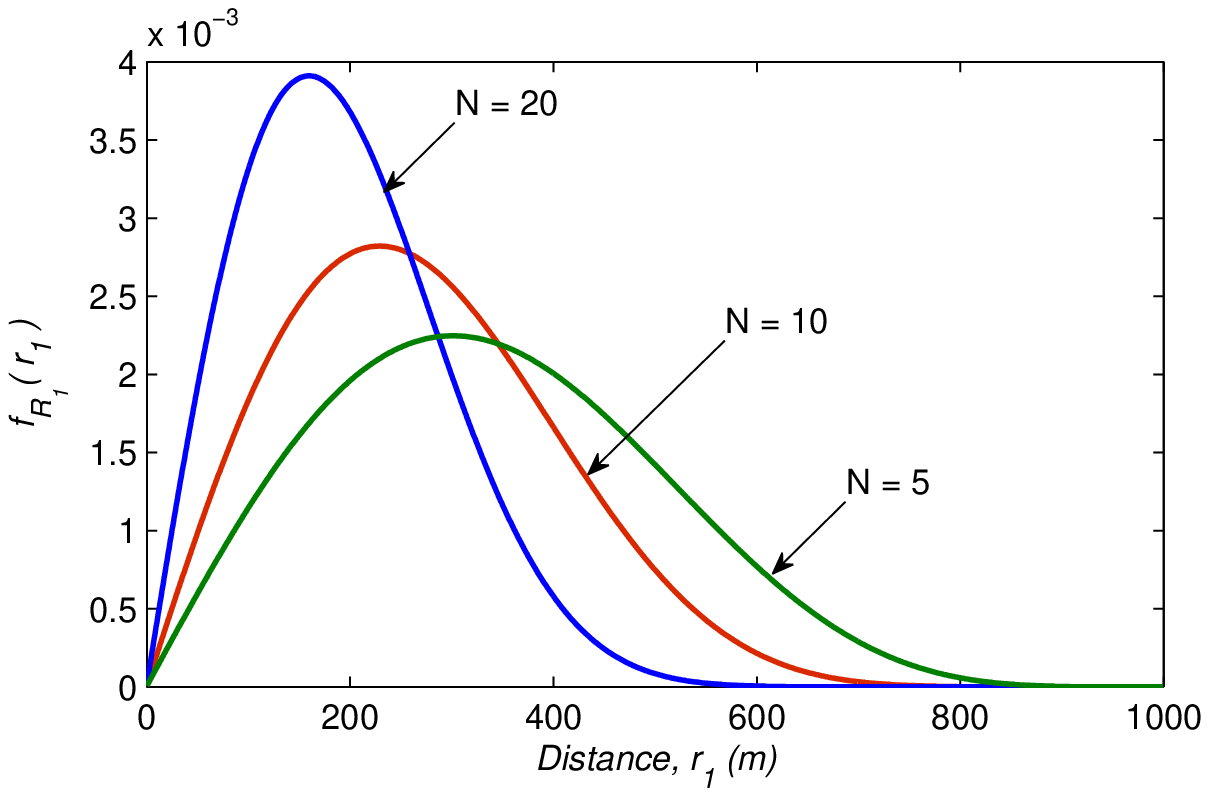}}
 \caption{Minimum distance PDF $f_{R_{\,1}}(r_{1})$ in a circular finite-area network with $R_{\textmd{W}} =1~ \textmd{km}$ for: a) different values of $d$ with $N = 10$; b) different values of $N$ with $d = 0$}
 \label{Appendix_I_Fig_1}
 \end{center}
\end{figure}

Figure~\ref{Appendix_I_Fig_1} illustrates the minimum distance PDF $f_{R_{\,1}}(r_{1})$ in a circular finite-area network with $R_{\textmd{W}} =1~ \textmd{km}$ for different values of $d$ and $N$. As is seen from Fig.~\ref{Appendix_I_Fig_1}-(a), for a given $N$, there is only a slight change in $f_{R_{\,1}}(r_{1})$ for $d \leq 0.75 R_{\textmd{W}}$. Therefore, we first concentrate on the area with $d \leq 0.75 R_{\textmd{W}}$.
Let $\textmd{Circ}(\textbf{x},R)$ denote a circle with the radius $R$ and the origin located at the point $\textbf{x}$. For a user located at the origin, denoted as $\textbf{o}$, the PDF of the distance to the nearest AP at $d = 0$, is given in~\eqref{eq_31} and is depicted in Fig.~\ref{Appendix_I_Fig_1}-(b) for different values of $N$ with $R_{\textmd{W}} =1~ \textmd{km}$. In this example, $f_{R_{1}}(r_{1})$ can be closely approximated by its truncated version for $0 \le r_{1} \le 0.75 R_{\textmd{W}} $ since $f_{R_{1}} (r_{1})$ is almost zero for $0.75 R_{\textmd{W}} \le r_{1} \le R_{\textmd{W}}$. This is like as if the largest possible distance to the nearest AP is $\bar{R} =0.75 R_{\textmd{W}}$ and $f_{R_{1}}(r_{1})$ is effectively non-zero only in $\textmd{Circ}(\textbf{o},0.75R_{\textmd{W}})$. The choice of $0.75 R_{\textmd{W}}$ is somewhat arbitrary and any reasonable choice would not change the justification.
\begin{figure}[t]
\begin{center}
\includegraphics[width = .35\textwidth]{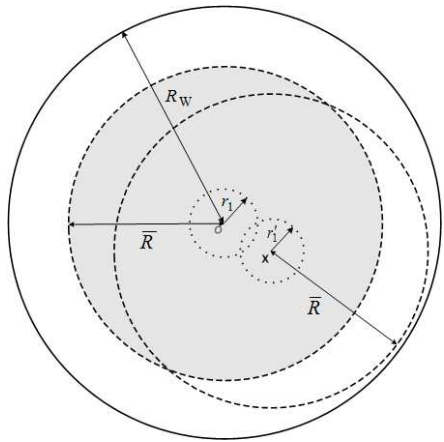}
\caption{Circular finite-area $\textmd{W}$ with radius $R_{\textmd{W}}$: the gray area indicates the region that the distance PDF associated with the user at the origin is effectively non-zero.}
\label{Appendix_I_Fig_2}
\end{center}
\end{figure}

Now consider a point $\textbf{x}$ within $\textmd{Circ}(\textbf{o},0.25 R_{\textmd{W}})$ as another user location within $\textmd{W}$ (see Fig.~\ref{Appendix_I_Fig_2}). Let $r\acute{}_{1}$ denotes the distance from point $\textbf{x}$ to the nearest AP. Since the $N$ APs are uniformly distributed in $\textmd{W}$, the PDF of $r\acute{}_{1}$, i.e., $f_{R_{1}} (r\acute{}_{1})$,  in the $\textmd{Circ}(\textbf{x},\bar{R} =0.75R_{\textmd{W}})$ would be the same as the truncated $f_{R_{1}}(r_{1})$ in $\textmd{Circ}(\textbf{o},0.75R_{\textmd{W}})$. As a result, for any $r_{1}=r\acute{}_{1}$, both users located at the two points of $\textbf{o}$ and $\textbf{x}$, receive the same signal power, on average, from their associated nearest APs. However, the two users do not experience the same interference power for $r_{1}=r\acute{}_{1}$. This is explained as follows. For the user at the origin $\textbf{o}$, the $N-1$ interfering APs are uniformly distributed between $\textmd{Circ}(\textbf{o}, R_{\textmd{W}})$ and $\textmd{Circ}(\textbf{o}, r_{1})$. On the other hand, the $N-1$ interfering APs for the user at $\textbf{x}$, are uniformly distributed in the area between $\textmd{Circ} (\textbf{o},R_{\textbf{W}})$ and $\textmd{Circ}(\textbf{x} , r\acute{}_{1}=r_{1})$, which is composed of  two  regions:  the   area   between  $\textmd{Circ}(\textbf{x},\bar{R})$ and $\textmd{Circ}(\textbf{x},r\acute{}_{1}=r_{1})$, and the area between $\textmd{Circ}(\textbf{o},R_{\textmd{W}})$ and $\textmd{Circ}(\textbf{x},\bar{R})$. The effect of the uniformly distributed interfering APs in the area between $\textmd{Circ}(\textbf{x},\bar{R})$ and $\textmd{Circ}(\textbf{x}, r\acute{}_{1}=r_{1})$ on the user at $\textbf{x}$, is the same as the one from uniformly distributed interfering APs in the area between $\textmd{Circ}(\textbf{o},\bar{R})$ and $\textmd{Circ}(\textbf{o},r_{1})$ on the user at $\textbf{o}$. However,  since  on  average,  the  distances  of the remaining interfering APs between
$\textmd{Circ}(\textbf{o} ,R_{\textmd{W}})$ and $\textmd{Circ}(\textbf{x},\bar{R})$ corresponding to the user at $\textbf{x}$, is larger than the distances of the remaining uniformly distributed interfering APs between $\textmd{Circ}(\textbf{o},R_{\textmd{W}})$ and $\textmd{Circ}(\textbf{o}, \bar{R})$ related to the user at $\textbf{o}$, the user at $\textbf{x}$ experiences lower interference power compared to the user at $\textbf{o}$. Thus, the user at the origin has the worst-case average SINR in the entire $\textmd{Circ}(\textbf{o} , 0.25R_{\textmd{W}})$.

For other points between $\textmd{Circ}(\textbf{o} , 0.25 R_{\textmd{W}})$ and $\textmd{Circ}(\textbf{o} , 0.75 R_{\textmd{W}})$, the PDF of distance to the nearest AP is nearly the same as $f_{R_{1}}(r_{1} \,|\,d = 0)$ (see Fig.~\ref{Appendix_I_Fig_1}-(a)). Therefore for a given distance $r_{1}$, the user receives the same signal power from its nearest AP as for the user at $\textbf{o}$, while it experiences much less interference power since the interfering APs are at a greater distance away, on average, compared to the ones for the user at $\textbf{o}$. As a result, it is intuitively clear that the worst-case SINR and so the worst-case user capacity is achieved at the centre of the circular area $\textmd{Circ}(\textbf{o} , 0.75 R_{\textmd{W}})$.

On the other hand, for larger values of $0.75 R_{\textmd{W}} \leq d \leq R_{\textmd{W}}$, the mean of random variable $r_{1}$ slightly increases with $d$. It follows that the average signal power decreases with $d$, resulting in the degradation of SINR in this region (this effect can be seen in Figs.~\ref{Fig_2D} -~\ref{Fig_3}.). However, still, since the user in this region experiences much less interference power in compared to the user at $\textbf{o}$, the worst-case SINR appears to be at the centre of $\textmd{Circ}(\textbf{o} , R_{\textmd{W}})$. Figs.~\ref{Fig_2D} - \ref{Fig_3} in the simulation section further justify this claim for different values of parameters of $N$, $\alpha$, and $\sigma_L$.

We note that in a noise-limited finite-area network, the above claim does not hold anymore. In this case the SNR remains approximately constant for $d \leq 0.75 R_{\textmd{W}}$  and degrades with $d$ for $d \geq 0.75 R_{\textmd{W}}$. It follows that the worst SNR occurs at the edge of the circular finite-area network.

\section{Approximation for the Worst Ergodic User Capacity under $\alpha=4$}

In Section~\ref{sec:SINR} the coverage probability and ergodic capacity were presented in terms of finite integrals that required numerical evaluation. In general, no closed form expressions are available; however, for the special case of the worst-case point (center) and for an integer choice of $\alpha$, some analysis is possible. Here, we present a closed-form approximation to the worst-case user capacity in an interference-limited network with $\alpha=4$~\cite{andrews2011tractable}. The worst achievable user capacity (in b/s/Hz) averaged over different realizations of AP locations is given by
\begin{equation} \label{eq_36_new}
C_{\textmd{\,ergodic}}^{\textmd{\,worst}} = \int_{0}^{R_{\textmd{W}}} C_{\textmd{\,ergodic} | r_{1}}^{\textmd{\,worst}} f_{R_{1}}(r_{1}) \textmd{d} r_{1}  =\int _{0}^{R_{\textmd{W}}} \Ex \left\{ N \log_{2} (1+\mathtt{SIR}_{\,r_{1}}) | r_{1} \right\} f_{R_{1}} (r_{1}) \textmd{d}r_{1},
\end{equation}
where $f_{R_{1}}(r_{1})$ is given in~\eqref{eq_31} and the ergodic capacity for a given $r_{1} $ is the ensemble average over different realizations of the channels in~\eqref{eq_2}.

For a given $r_{1} $, the ergodic capacity $C_{\textmd{\,ergodic}\,| \, r_{1} }^{\textmd{\,worst}} =\Ex\left\{ N \log_{2} (1+\mathtt{SIR}_{\,r_{1}}) | r_{1} \right\}$ is upper-bounded by $N \log _{2} (1+\Ex\{ \mathtt{SIR}_{\,r_{1}} |r_{1} \} )$. On the other hand, since the mean $\Ex\{ \mathtt{SIR}_{\,r_{1} } |r_{1} \} =\exp{(\mu_{\,\textmd{SIR}} +\sigma _{\textmd{\,SIR}}^{2} /2)}$ is always greater than one, the $\log_{2} (1+\Ex\{ \mathtt{SIR}_{\,r_{1} } |\,r_{1} \})$ itself is lower-bounded by $\log_{2} (\Ex\{ \mathtt{SIR}_{\,r_{1}} |\,r_{1} \})$. In general, there is no guarantee that $N \log_{2} (\Ex\{ \mathtt{SIR}_{\,r_{1} } |\,r_{1} \})$ is lower than $C_{\,\textmd{ergodic} \, | \, r_{1} }^{\,\textmd{worst}}$. However, for small values of $\mu _{\textmd{SIR}}$ and $\sigma_{\textmd{SIR}}$, $N \log_{2} (\Ex\{\mathtt{SIR}_{\,r_{1} } |r_{1} \} )\le C_{\,\textmd{ergodic}\, |\, r_{1} }^{\,\textmd{worst}}$ holds. Therefore, an approximate lower-bound on the $C_{\,\textmd{ergodic} \, |\, r_{1} }^{\,\textmd{worst}}$, can be obtained as
\begin{equation} \label{eq_37_new}
C_{\textmd{ergodic} \, | \, r_{1} }^{\,\textmd{worst}} \approx N \log_{2} (\Ex\{ \mathtt{SIR}_{\,r_{1}} |\,r_{1} \} )=N(\mu_{\textmd{SIR}} +\sigma_{\textmd{SIR}}^{2} /2)/\ln 2.
\end{equation}

In an interference-limited network with $\alpha =4$, $\mu_{\textmd{Denom}}$ and $\sigma_{\textmd{Denom}}^{2}$ from~\eqref{eq_mu_I}-\eqref{eq_sigma_I} simplify to
\begin{equation} \label{eq_38_new}
\begin{array}{l} {\mu_{\textmd{Denom}} =\ln \bigg[\frac{(N-1)^{2} \sigma _{s}^{4} e^{\sigma _{z}^{2} } \left(\frac{r_{1}^{-2} -R_{\textmd{W}}^{-2} }{R_{\textmd{W}}^{2} -r_{1}^{2} } \right)^{2} }{\big((N-1)(N-2)\sigma _{s}^{4} e^{\sigma _{z}^{2} } \left(\frac{r_{1}^{-2} -R_{\textmd{W}}^{-2} }{R_{\textmd{W}}^{2} -r_{1}^{2} } \right)^{2} \big)^{1/2} \big(1+\frac{2}{3(N-2)} (\underbrace{1+ r_{1}^{2} /R_{\textmd{W}}^{2} +R_{\textmd{W}}^{2} /r_{1}^{2}}_{\approx R_{\textmd{W}}^{2} /r_{1}^{2}}) \big)^{1/2} } \bigg]} \\ {{\kern 1pt} {\kern 1pt} {\kern 1pt} {\kern 1pt} {\kern 1pt} {\kern 1pt} {\kern 1pt} {\kern 1pt} {\kern 1pt} {\kern 1pt} {\kern 1pt} {\kern 1pt} {\kern 1pt} {\kern 1pt} {\kern 1pt} {\kern 1pt} {\kern 1pt} {\kern 1pt} {\kern 1pt} {\kern 1pt} {\kern 1pt} {\kern 1pt} {\kern 1pt} {\kern 1pt} {\kern 1pt} {\kern 1pt} {\kern 1pt} {\kern 1pt} {\kern 1pt} {\kern 1pt} {\kern 1pt} {\kern 1pt} \approx \ln \left[\frac{(N-1)^{3/2} \sigma _{s}^{2} e^{\sigma _{z}^{2} /2} }{(N-2)^{1/2} } \right]+\ln \left[\frac{r_{1}^{-2} -R_{\textmd{W}}^{-2} }{R_{\textmd{W}}^{2} -r_{1}^{2} } \right]-\frac{1}{2} \ln \left[1+\frac{2e^{\sigma _{z}^{2} } R_{\textmd{W}}^{2} }{3(N-2){\kern 1pt} {\kern 1pt} r_{1}^{2} } \right]} \end{array},
\end{equation}
\begin{equation} \label{eq_39_new}
\begin{array}{l} {\sigma_{\textmd{Denom}}^{2} = \ln \bigg[\frac{N-2}{N-1} +\frac{2e^{\sigma _{z}^{2} } }{3(N-1)} (\underbrace{1+\frac{r_{1}^{2} }{R_{\textmd{W}}^{2} } +\frac{R_{\textmd{W}}^{2} }{r_{1}^{2} } }_{\approx R_{\textmd{W}}^{2} /r_{1}^{2} } )\bigg]\approx \ln \left[\left(\frac{N-2}{N-1} \right)\left(1+\frac{2e^{\sigma _{z}^{2} } R_{\textmd{W}}^{2} }{3(N-2){\kern 1pt} r_{1}^{2} } \right)\right]}. \end{array}
\end{equation}

\noindent In the first line of \eqref{eq_38_new}-\eqref{eq_39_new}, the term $(1+r_{1}^{2} /R_{\textmd{W}}^{\,2} +R_{\textmd{W}}^{\,2} /r_{1}^{2})$ is approximated by $R_{\textmd{W}}^{\,2} /r_{1}^{2} $. This is justified as follows. As is seen from Fig.~\ref{Appendix_I_Fig_1}-(b), $f_{R_{1}}(r_{1})$ at the worst-case point is small for $0.75 R_{\textmd{W}} \le r_{1} \le {\kern 1pt} R_{\textmd{W}} $. As a result, for the range of values of $r_{1} $ that effectively contribute in the average capacity~\eqref{eq_36_new}, $R_{\textmd{W}}^{\,2} /r_{1}^{2} \gg 1$ and so the approximation is valid. Following~\eqref{eq_38_new}-\eqref{eq_39_new},
\begin{equation} \label{eq_40_new}
\mu_{\textmd{SIR}} \approx 2\ln R_{\textmd{W}} -2\ln r_{1} +\ln \frac{(N-2)^{1/2} }{(N-1)^{3/2}} -\ln \sqrt{2} -\sigma _{z}^{2} /2+\frac{1}{2} \ln \bigg[1+\frac{2e^{\sigma _{z}^{2} } R_{\textmd{W}}^{2} }{3(N-2){\kern 1pt} r_{1}^{2} } \bigg],
\end{equation}
\begin{equation} \label{eq_41_new}
\sigma_{\textmd{SIR}}^{2} \approx \sigma_{z}^{2} +\ln 2+\ln \frac{N-2}{N-1} +\ln \bigg[1+\frac{2e^{\sigma _{z}^{2} } R_{\textmd{W}}^{2} }{3(N-2){\kern 1pt} r_{1}^{2} } \bigg],
\end{equation}
which gives the conditional ergodic capacity $C_{\textmd{ergodic} \, | \, r_{1} }^{\,\textmd{worst}} $ as
\begin{equation} \label{eq_42_new}
C_{\textmd{ergodic}\,  |\, r_{1} }^{\,\textmd{worst}} =\frac{N}{\ln 2} \bigg( 2\ln R_{\textmd{W}} -2\ln r_{1} +\ln \frac{(N-2)^{1/2} }{(N-1)^{3/2} } +\frac{1}{2} \ln \frac{N-2}{N-1} +\ln \bigg[1+\frac{2e^{\sigma _{z}^{2} } R_{\textmd{W}}^{2} }{3(N-2){\kern 1pt} r_{1}^{2} } \bigg] \bigg).
\end{equation}

Now the worst ergodic user capacity in~\eqref{eq_36_new} is given as
\begin{equation} \label{eq_43_new}
\begin{array}{l} {C_{\,\textmd{ergodic}}^{\,\textmd{worst}} = \displaystyle\int_{0}^{R_{\textmd{W}} }C_{\,\textmd{ergodic} | r_{1} }^{\,\textmd{worst}} f_{R_{1}}(r_{1}) \textmd{d}\,r_{1}  =\frac{N}{\ln 2} \left(2\ln R_{\textmd{W}} +\ln \frac{(N-2)^{1/2} }{(N-1)^{3/2} } +\frac{1}{2} \ln \frac{N-2}{N-1} \right)} \\ {{\kern 1pt} {\kern 1pt} {\kern 1pt} {\kern 1pt} {\kern 1pt} {\kern 1pt} {\kern 1pt} {\kern 1pt} {\kern 1pt} {\kern 1pt} {\kern 1pt} {\kern 1pt} {\kern 1pt} {\kern 1pt} {\kern 1pt} {\kern 1pt} {\kern 1pt} {\kern 1pt} {\kern 1pt} {\kern 1pt} {\kern 1pt} {\kern 1pt} {\kern 1pt} {\kern 1pt} {\kern 1pt} {\kern 1pt} {\kern 1pt} {\kern 1pt} {\kern 1pt} {\kern 1pt} {\kern 1pt} {\kern 1pt} {\kern 1pt} {\kern 1pt} {\kern 1pt} {\kern 1pt} {\kern 1pt} {\kern 1pt} {\kern 1pt} {\kern 1pt} {\kern 1pt} {\kern 1pt} {\kern 1pt} {\kern 1pt} {\kern 1pt} {\kern 1pt} {\kern 1pt} {\kern 1pt} {\kern 1pt} {\kern 1pt} {\kern 1pt} {\kern 1pt} {\kern 1pt} {\kern 1pt} {\kern 1pt} {\kern 1pt} +\frac{N}{\ln 2} \bigg(\underbrace{\int _{0}^{R_{\textmd{W}} }-2\ln r_{1} f_{R_{1}}(r_{1})\textmd{d}r_{1}}_{\left(*\right)} +\underbrace{\int _{0}^{R_{\textmd{W}} }\ln \bigg[1+\frac{2e^{\sigma _{z}^{2} } R_{\textmd{W}}^{2} }{3(N-2){\kern 1pt} r_{1}^{2} } \bigg]f_{R_{1}} (r_{1} )\textmd{d}r_{1}}_{\left(**\right)} \bigg)} \end{array}
\end{equation}

All that remains is to solve the integrals in $(*)$ and $(**)$. Using the binomial equivalence for $(1+x)^{n} =\sum_{k=0}^{n} \binom{n}{k} x^{k} $, the distance PDF $f_{R_{\,1}}(r_{1})$ in \eqref{eq_31} can be rewritten as

\begin{equation} \label{eq_44_new}
f_{R_{\,1}}(r_{1} )=\frac{2Nr_{1}}{R_{\textmd{W}}^{2}} \sum_{k=0}^{N-1}(-1)^{k} \binom{N-1}{k} \frac{r_{1}^{2k} }{R_{\textmd{W}}^{\,2k} }  {\kern 1pt} {\kern 1pt} {\kern 1pt} {\kern 1pt} ;{\kern 1pt} {\kern 1pt} {\kern 1pt} {\kern 1pt} 0\le r_{1} \le R_{\textmd{W}} .
\end{equation}

Now, the integration in $(*)$ follows as
\begin{equation} \label{eq_45_new}
\begin{array}{l} { \int_{0}^{R_{\textmd{W}} }-2\ln r_{1} f_{R_{\,1}}(r_{1} )\textmd{d}r_{1} = -\frac{4N}{R_{\textmd{W}}^{\,2} } \sum\limits_{k=0}^{N-1} \left[\frac{(-1)^{k} }{R_{\textmd{W}}^{\,2k} } \binom{N-1}{k} \int _{0}^{R_{\textmd{W}} }r_{1}^{2k+1} \ln r_{1} {\kern 1pt} \textmd{d} r_{1}  \right]} \\ {{\kern 1pt} {\kern 1pt} {\kern 1pt} {\kern 1pt} {\kern 1pt} {\kern 1pt} {\kern 1pt} {\kern 1pt} {\kern 1pt} {\kern 1pt} {\kern 1pt} {\kern 1pt} {\kern 1pt} {\kern 1pt} {\kern 1pt} {\kern 1pt} {\kern 1pt} {\kern 1pt} {\kern 1pt} {\kern 1pt} {\kern 1pt} {\kern 1pt} {\kern 1pt} {\kern 1pt} {\kern 1pt} {\kern 1pt} {\kern 1pt} {\kern 1pt} {\kern 1pt} {\kern 1pt} {\kern 1pt} {\kern 1pt} {\kern 1pt} {\kern 1pt} {\kern 1pt} {\kern 1pt} {\kern 1pt} {\kern 1pt} {\kern 1pt} {\kern 1pt} {\kern 1pt} {\kern 1pt} {\kern 1pt} {\kern 1pt} {\kern 1pt} {\kern 1pt} {\kern 1pt} {\kern 1pt} {\kern 1pt} {\kern 1pt} {\kern 1pt} =-4N\ln R_{\textmd{W}} \sum\limits_{k=0}^{N-1}(-1)^{k} \binom{N-1}{k} \frac{1}{2k+2}  +4N\sum\limits_{k=0}^{N-1}(-1)^{k} \binom{N-1}{k} \frac{1}{(2k+2)^{2}}} \\ {{\kern 1pt} {\kern 1pt} {\kern 1pt} {\kern 1pt} {\kern 1pt} {\kern 1pt} {\kern 1pt} {\kern 1pt} {\kern 1pt} {\kern 1pt} {\kern 1pt} {\kern 1pt} {\kern 1pt} {\kern 1pt} {\kern 1pt} {\kern 1pt} {\kern 1pt} {\kern 1pt} {\kern 1pt} {\kern 1pt} {\kern 1pt} {\kern 1pt} {\kern 1pt} {\kern 1pt} {\kern 1pt} {\kern 1pt} {\kern 1pt} {\kern 1pt} {\kern 1pt} {\kern 1pt} {\kern 1pt} {\kern 1pt} {\kern 1pt} {\kern 1pt} {\kern 1pt} {\kern 1pt} {\kern 1pt} {\kern 1pt} {\kern 1pt} {\kern 1pt} {\kern 1pt} {\kern 1pt} {\kern 1pt} {\kern 1pt} {\kern 1pt} {\kern 1pt} {\kern 1pt} {\kern 1pt} {\kern 1pt} {\kern 1pt} {\kern 1pt} =-2N\ln R_{\textmd{W}} \sum\limits_{k=0}^{N-1}\frac{(-1)^{k} }{N} \binom{N-1}{k+1} +N\sum\limits_{k=0}^{N-1} \frac{(-1)^{k}}{N} \binom{N-1}{k+1} \frac{1}{k+1}} \\ {{\kern 1pt} {\kern 1pt} {\kern 1pt} {\kern 1pt} {\kern 1pt} {\kern 1pt} {\kern 1pt} {\kern 1pt} {\kern 1pt} {\kern 1pt} {\kern 1pt} {\kern 1pt} {\kern 1pt} {\kern 1pt} {\kern 1pt} {\kern 1pt} {\kern 1pt} {\kern 1pt} {\kern 1pt} {\kern 1pt} {\kern 1pt} {\kern 1pt} {\kern 1pt} {\kern 1pt} {\kern 1pt} {\kern 1pt} {\kern 1pt} {\kern 1pt} {\kern 1pt} {\kern 1pt} {\kern 1pt} {\kern 1pt} {\kern 1pt} {\kern 1pt} {\kern 1pt} {\kern 1pt} {\kern 1pt} {\kern 1pt} {\kern 1pt} {\kern 1pt} {\kern 1pt} {\kern 1pt} {\kern 1pt} {\kern 1pt} {\kern 1pt} {\kern 1pt} {\kern 1pt} {\kern 1pt} {\kern 1pt} {\kern 1pt} {\kern 1pt} =2\ln R_{\textmd{W}} \bigg[\sum\limits_{k'=0}^{N}(-1)^{k'} \binom{N}{k'} -1 \bigg]+\sum\limits_{k'=1}^{N}(-1)^{k'-1} \binom{N}{k'} \int _{0}^{1}u^{k'-1} \textmd{d}u} \\ {{\kern 1pt} {\kern 1pt} {\kern 1pt} {\kern 1pt} {\kern 1pt} {\kern 1pt} {\kern 1pt} {\kern 1pt} {\kern 1pt} {\kern 1pt} {\kern 1pt} {\kern 1pt} {\kern 1pt} {\kern 1pt} {\kern 1pt} {\kern 1pt} {\kern 1pt} {\kern 1pt} {\kern 1pt} {\kern 1pt} {\kern 1pt} {\kern 1pt} {\kern 1pt} {\kern 1pt} {\kern 1pt} {\kern 1pt} {\kern 1pt} {\kern 1pt} {\kern 1pt} {\kern 1pt} {\kern 1pt} {\kern 1pt} {\kern 1pt} {\kern 1pt} {\kern 1pt} {\kern 1pt} {\kern 1pt} {\kern 1pt} {\kern 1pt} {\kern 1pt} {\kern 1pt} {\kern 1pt} {\kern 1pt} {\kern 1pt} {\kern 1pt} {\kern 1pt} {\kern 1pt} {\kern 1pt} {\kern 1pt} {\kern 1pt} {\kern 1pt} =2 \ln R_{\textmd{W}} \big[(1-x)^{N} \left|_{x=1} \right. -1\big]+ \int_{0}^{1} \bigg[\sum\limits_{k'=1}^{N}(-1)^{k'-1} \binom{N}{k'} u^{k'-1}  \bigg] \textmd{d}u} \\ {{\kern 1pt} {\kern 1pt} {\kern 1pt} {\kern 1pt} {\kern 1pt} {\kern 1pt} {\kern 1pt} {\kern 1pt} {\kern 1pt} {\kern 1pt} {\kern 1pt} {\kern 1pt} {\kern 1pt} {\kern 1pt} {\kern 1pt} {\kern 1pt} {\kern 1pt} {\kern 1pt} {\kern 1pt} {\kern 1pt} {\kern 1pt} {\kern 1pt} {\kern 1pt} {\kern 1pt} {\kern 1pt} {\kern 1pt} {\kern 1pt} {\kern 1pt} {\kern 1pt} {\kern 1pt} {\kern 1pt} {\kern 1pt} {\kern 1pt} {\kern 1pt} {\kern 1pt} {\kern 1pt} {\kern 1pt} {\kern 1pt} {\kern 1pt} {\kern 1pt} {\kern 1pt} {\kern 1pt} {\kern 1pt} {\kern 1pt} {\kern 1pt} {\kern 1pt} {\kern 1pt} {\kern 1pt} {\kern 1pt} {\kern 1pt} {\kern 1pt} =-2\ln R_{\textmd{W}} + \int_{0}^{1}\frac{1}{u} \big[1-(1-u)^{N} \big] \textmd{d}u} \\
{{\kern 1pt} {\kern 1pt} {\kern 1pt} {\kern 1pt} {\kern 1pt} {\kern 1pt} {\kern 1pt} {\kern 1pt} {\kern 1pt} {\kern 1pt} {\kern 1pt} {\kern 1pt} {\kern 1pt} {\kern 1pt} {\kern 1pt} {\kern 1pt} {\kern 1pt} {\kern 1pt} {\kern 1pt} {\kern 1pt} {\kern 1pt} {\kern 1pt} {\kern 1pt} {\kern 1pt} {\kern 1pt} {\kern 1pt} {\kern 1pt} {\kern 1pt} {\kern 1pt} {\kern 1pt} {\kern 1pt} {\kern 1pt} {\kern 1pt} {\kern 1pt} {\kern 1pt} {\kern 1pt} {\kern 1pt} {\kern 1pt} {\kern 1pt} {\kern 1pt} {\kern 1pt} {\kern 1pt} {\kern 1pt} {\kern 1pt} {\kern 1pt} {\kern 1pt} {\kern 1pt} {\kern 1pt} {\kern 1pt} {\kern 1pt} {\kern 1pt} =-2\ln R_{\textmd{W}} + \int_{{\kern 1pt} 0}^{1}\frac{1-x^{N} }{1-x} \textmd{d}x =-2\ln R_{\textmd{W}} +H_{N}} \end{array}
\end{equation}
In the last line, $\int_{0}^{1}(1-x^{N} )/(1-x)\textmd{d}x $ is an integral representation of the \textit{N}-th harmonic number, $H_{N} =\sum _{k=1}^{N}1/k $, given by Euler. The corresponding expansion of $H_{N} $ is given as~\cite{havil2003exploring}

\begin{equation} \label{eq_46_new}
H_{N} \approx \ln N+\gamma +\frac{1}{2N} -\sum _{k=1}^{\infty }\frac{\xi _{2k} }{2kN^{2k} }  ,
\end{equation}
where $\gamma \approx 0.578$ is the Euler-Mascheroni constant and $\xi _{k};k=1,\cdots$ are the Bernoulli numbers. Approximating $H_{N}$ with the first three terms in \eqref{eq_46_new}, the integration in \eqref{eq_45_new} is given as
\begin{equation} \label{eq_47_new}
\int_{0}^{R_{\textmd{W}}}-2\ln r_{1} f_{R_{\,1} }(r_{1})\textmd{d} r_{1}  \approx -2\ln R_{\textmd{W}} +\ln N+\gamma +\frac{1}{2N} .
\end{equation}

For the integration denoted as $(**)$, let $b=2e^{\sigma _{z}^{2}} R_{\textmd{W}}^{2} /(3(N-2))$. The integration follows as
\begin{equation} \label{eq_48_new}
\begin{array}{l} {\displaystyle\int_{0}^{R_{\textmd{W}} }\ln \big[1+\frac{b}{{\kern 1pt} r_{1}^{2} } \big] f_{R_{\,1}}(r_{1} )\textmd{d} r_{1} =\int _{0}^{R_{\textmd{W}} }-2\ln r_{1} f_{R_{\,1} } (r_{1} )\textmd{d} r_{1}  + \int _{0}^{R_{\textmd{W}}}\ln (r_{1}^{2} +b)f_{R_{\,1}}(r_{1}) \textmd{d} r_{1}} \\ {{\kern 1pt} {\kern 1pt} {\kern 1pt} {\kern 1pt} {\kern 1pt} {\kern 1pt} {\kern 1pt} {\kern 1pt} {\kern 1pt} {\kern 1pt} {\kern 1pt} {\kern 1pt} {\kern 1pt} {\kern 1pt} {\kern 1pt} {\kern 1pt} {\kern 1pt} {\kern 1pt} {\kern 1pt} {\kern 1pt} {\kern 1pt} {\kern 1pt} {\kern 1pt} {\kern 1pt} {\kern 1pt} {\kern 1pt} {\kern 1pt} {\kern 1pt} {\kern 1pt} {\kern 1pt} {\kern 1pt} {\kern 1pt} {\kern 1pt} {\kern 1pt} {\kern 1pt} {\kern 1pt} {\kern 1pt} {\kern 1pt} {\kern 1pt} {\kern 1pt} {\kern 1pt} {\kern 1pt} {\kern 1pt} {\kern 1pt} {\kern 1pt} {\kern 1pt} {\kern 1pt} {\kern 1pt} {\kern 1pt} {\kern 1pt} {\kern 1pt} {\kern 1pt} {\kern 1pt} {\kern 1pt} {\kern 1pt} {\kern 1pt} {\kern 1pt} {\kern 1pt} {\kern 1pt} {\kern 1pt} {\kern 1pt} {\kern 1pt} {\kern 1pt} {\kern 1pt} {\kern 1pt} {\kern 1pt} {\kern 1pt} {\kern 1pt} {\kern 1pt} {\kern 1pt} {\kern 1pt} {\kern 1pt} {\kern 1pt} {\kern 1pt} {\kern 1pt} {\kern 1pt} {\kern 1pt} {\kern 1pt} {\kern 1pt} {\kern 1pt} {\kern 1pt} {\kern 1pt} {\kern 1pt} {\kern 1pt} {\kern 1pt} {\kern 1pt} {\kern 1pt} {\kern 1pt} {\kern 1pt} {\kern 1pt} {\kern 1pt} {\kern 1pt} {\kern 1pt} {\kern 1pt} {\kern 1pt} {\kern 1pt} {\kern 1pt} {\kern 1pt} {\kern 1pt} {\kern 1pt} \approx -2\ln R_{\textmd{W}} +\ln N+\gamma +\frac{1}{2N} +\underbrace{\int_{0}^{R_{\textmd{W}} }\ln (r_{1}^{2} +b)f_{r_{1} } (r_{1} )d_{r_{1} }  }_{\left(***\right)} } \end{array}.
\end{equation}

Using integration by parts and letting $u=\ln [\,r_{1}^{2} +b\,]$ and $dv=r_{1}^{2k+1} dr_{1} $, the integration $(***)$ can be rewritten as
\begin{equation} \label{eq_49_new}
\begin{array}{l} {(***) = \displaystyle\int_{0}^{R_{\textmd{W}}}\ln [\,r_{1}^{2} +b \,]\frac{2Nr_{1} }{R_{\textmd{W}}^{2} } \sum\limits_{k=0}^{N-1}(-1)^{k} \binom{N-1}{k} \frac{r_{1}^{2k} }{R_{\textmd{W}}^{\,2k}} \textmd{d} r_{1}} \\ {~~~~~~~~~ =\frac{2N}{R_{\textmd{W}}^{\,2} } \sum\limits_{k=0}^{N-1}\frac{(-1)^{k} }{R_{\textmd{W}}^{\,2k}} \binom{N-1}{k} \int_{0}^{R_{\textmd{W}} }\underbrace{\ln [\,r_{1}^{2} +b\,]}_{u} \underbrace{r_{1}^{2k+1} \textmd{d}r_{1} }_{dv}} \\ {{\kern 1pt} {\kern 1pt} {\kern 1pt} {\kern 1pt} {\kern 1pt} {\kern 1pt} {\kern 1pt} {\kern 1pt} {\kern 1pt} {\kern 1pt} {\kern 1pt} {\kern 1pt} {\kern 1pt} {\kern 1pt} {\kern 1pt} {\kern 1pt} {\kern 1pt} {\kern 1pt} {\kern 1pt} {\kern 1pt} {\kern 1pt} {\kern 1pt} {\kern 1pt} {\kern 1pt} {\kern 1pt} {\kern 1pt} {\kern 1pt} {\kern 1pt} {\kern 1pt} {\kern 1pt} {\kern 1pt} {\kern 1pt} {\kern 1pt} =2N\sum\limits_{k=0}^{N-1}\frac{(-1)^{k} }{R_{\textmd{W}}^{\,2k+2}} \binom{N-1}{k} \bigg(\underbrace{\frac{\ln [R_{\textmd{W}}^{\,2} +b\,] R_{\textmd{W}}^{\,2k+2} }{2k+2} }_{u{\kern 1pt} v} - \displaystyle\int_{0}^{R_{\textmd{W}} }\underbrace{\frac{2r_{1}^{2k+3} }{(2k+2)(r_{1}^{2} +b)} \textmd{d}r_{1}^{}}_{v{\kern 1pt} du}  \bigg)} \\ {{\kern 1pt} {\kern 1pt} {\kern 1pt} {\kern 1pt} {\kern 1pt} {\kern 1pt} {\kern 1pt} {\kern 1pt} {\kern 1pt} {\kern 1pt} {\kern 1pt} {\kern 1pt} {\kern 1pt} {\kern 1pt} {\kern 1pt} {\kern 1pt} {\kern 1pt} {\kern 1pt} {\kern 1pt} {\kern 1pt} {\kern 1pt} {\kern 1pt} {\kern 1pt} {\kern 1pt} {\kern 1pt} {\kern 1pt} {\kern 1pt} {\kern 1pt} {\kern 1pt} {\kern 1pt} {\kern 1pt} {\kern 1pt} = \underbrace{\ln [R_{\textmd{W}}^{\,2} +b\,]}_{2\ln R_{\textmd{W}} +\ln [\,1+\bar{b}\,]} -N\sum\limits_{k=0}^{N-1}\frac{(-1)^{k} }{R_{\textmd{W}}^{\,2k+2} (k+1)} \binom{N-1}{k} \bigg[\underbrace{(-1)^{k+1} b^{k+1} \ln [\,r_{1}^{2} +b\,]}_{(49-a)} +\underbrace{\frac{r_{1}^{2k+2} }{k+1} }_{(49-b)}} \\ {{\kern 1pt} {\kern 1pt} {\kern 1pt} {\kern 1pt} {\kern 1pt} {\kern 1pt} {\kern 1pt} {\kern 1pt} {\kern 1pt} {\kern 1pt} {\kern 1pt} {\kern 1pt} {\kern 1pt} {\kern 1pt} {\kern 1pt} {\kern 1pt} {\kern 1pt} {\kern 1pt} {\kern 1pt} {\kern 1pt} {\kern 1pt} {\kern 1pt} {\kern 1pt} {\kern 1pt} {\kern 1pt} {\kern 1pt} {\kern 1pt} {\kern 1pt} {\kern 1pt} {\kern 1pt} {\kern 1pt} {\kern 1pt} {\kern 1pt} {\kern 1pt} {\kern 1pt} {\kern 1pt} {\kern 1pt} {\kern 1pt} {\kern 1pt} {\kern 1pt} {\kern 1pt} {\kern 1pt} {\kern 1pt} {\kern 1pt} {\kern 1pt} {\kern 1pt} {\kern 1pt} {\kern 1pt} {\kern 1pt} {\kern 1pt} {\kern 1pt} {\kern 1pt} {\kern 1pt} {\kern 1pt} {\kern 1pt} {\kern 1pt} {\kern 1pt} {\kern 1pt} {\kern 1pt} {\kern 1pt} {\kern 1pt} {\kern 1pt} {\kern 1pt} {\kern 1pt} {\kern 1pt} {\kern 1pt} {\kern 1pt} {\kern 1pt} {\kern 1pt} {\kern 1pt} {\kern 1pt} {\kern 1pt} {\kern 1pt} {\kern 1pt} {\kern 1pt} {\kern 1pt} {\kern 1pt} {\kern 1pt} {\kern 1pt} {\kern 1pt} {\kern 1pt} {\kern 1pt} {\kern 1pt} {\kern 1pt} {\kern 1pt} {\kern 1pt} {\kern 1pt} {\kern 1pt} {\kern 1pt} {\kern 1pt} {\kern 1pt} {\kern 1pt} {\kern 1pt} {\kern 1pt} {\kern 1pt} {\kern 1pt} {\kern 1pt} {\kern 1pt} {\kern 1pt} {\kern 1pt} {\kern 1pt} {\kern 1pt} {\kern 1pt} {\kern 1pt} {\kern 1pt} {\kern 1pt} {\kern 1pt} {\kern 1pt} {\kern 1pt} {\kern 1pt} {\kern 1pt} {\kern 1pt} {\kern 1pt} {\kern 1pt} {\kern 1pt} {\kern 1pt} {\kern 1pt} {\kern 1pt} {\kern 1pt} {\kern 1pt} {\kern 1pt} {\kern 1pt} {\kern 1pt} {\kern 1pt} {\kern 1pt} {\kern 1pt} {\kern 1pt} {\kern 1pt} {\kern 1pt} {\kern 1pt} {\kern 1pt} {\kern 1pt} {\kern 1pt} {\kern 1pt} {\kern 1pt} {\kern 1pt} {\kern 1pt} {\kern 1pt} {\kern 1pt} {\kern 1pt} {\kern 1pt} {\kern 1pt} {\kern 1pt} {\kern 1pt} {\kern 1pt} {\kern 1pt} {\kern 1pt} {\kern 1pt} {\kern 1pt} {\kern 1pt} {\kern 1pt} {\kern 1pt} {\kern 1pt} {\kern 1pt} +\underbrace{\frac{r_{1}^{2k+2} }{2(k+1)} \ln (1+\bar{b})}_{(49-c)} +\underbrace{\sum _{k'=1}^{k}(-1)^{k+k'-1} \frac{r_{1}^{2k'} b^{k+1-k'} }{k'}}_{(49-d)} \bigg]_{ 0}^{\!R_{\textmd{W}}}} \end{array},
\end{equation}
with $\bar{b}=b/R_{\textmd{W}}^{2}$. The results associated with the terms $(49-a)$-$(49-d)$ are obtained as
\begin{equation} \label{eq_50_new}
\begin{array}{l} {(49-a)\to -N\ln [1+\frac{3}{2} e^{-\sigma _{z}^{2} } (N-2)] \sum\limits_{k=0}^{N-1}\frac{(-1)^{2k+1} }{R_{\textmd{W}}^{2k+2} (k+1)} \binom{N-1}{k} b^{k+1}} \\ {~~~~~~~~~~~~~ =\ln [1+\frac{3}{2} e^{-\sigma _{z}^{2} } (N-2)]\sum\limits_{k=0}^{N-1} \binom{N}{k+1} \bar{b}^{k+1}} \\ { ~~~~~~~~~~~~~ =\ln [1+\frac{3}{2} e^{-\sigma _{z}^{2} } (N-2)] \bigg[\sum\limits_{k=0}^{N} \binom{N}{k} \bar{b}^{k} -1 \bigg]} \\ { ~~~~~~~~~~~~~ =\ln [1+\frac{3}{2} e^{-\sigma _{z}^{2} } (N-2)] \left((1+\bar{b})^{N} -1\right) } \\ { ~~~~~~~~~~~~~ =\ln [1+\frac{3}{2} e^{-\sigma _{z}^{2} } (N-2)] \left((1+\frac{2e^{\sigma _{z}^{2} } }{3{\kern 1pt} (N-2)} )^{N} -1\right)} \end{array},
\end{equation}
\begin{equation} \label{eq_51_new}
(49-b) \to -N\sum\limits_{k=0}^{N-1} \frac{(-1)^{k} }{R_{\textmd{W}}^{2k+2}} \binom{N-1}{k} \frac{R_{\textmd{W}}^{2k+2} }{(k+1)^{2} } =-(\ln N+\gamma +\frac{1}{2N} ),
\end{equation}
\begin{equation} \label{eq_52_new}
(49-c)\to -N \sum\limits_{k=0}^{N-1}\frac{(-1)^{k} }{R_{\textmd{W}}^{2k+2} } \binom{N-1}{k} \frac{R_{\textmd{W}}^{2k+2} }{2(k+1)} \ln [1+\bar{b}]=-\frac{1}{2} \ln [1+\bar{b}],
\end{equation}
\begin{equation} \label{eq_53_new}
\begin{array}{l} {(49-d) \to -N\sum\limits_{k=0}^{N-1}\frac{(-1)^{k} }{R_{\textmd{W}}^{2k+2} (k+1)} \binom{N-1}{k} \sum\limits_{k'=1}^{k}(-1)^{k+k'-1} \frac{R_{\textmd{W}}^{2k} b^{k+1-k'} }{k'}} \\ { \approx -\big(\ln [N-1]+\frac{1}{2(N-1)} \big)\big((\frac{2}{3} e^{\sigma _{z}^{2} } )+\frac{1}{2!} (\frac{2}{3} e^{\sigma _{z}^{2} } )^{2} +\frac{1}{3{\kern 1pt} !} (\frac{2}{3} e^{\sigma _{z}^{2} } )^{3} + \cdots +\frac{1}{(N-1){\kern 1pt} {\kern 1pt} !} (\frac{2}{3} e^{\sigma _{z}^{2} } )^{N-1} \big){\kern 1pt} {\kern 1pt} {\kern 1pt} {\kern 1pt} } \\ { \mathop{\approx }\limits_{\sigma _{z}^{2} \le {\kern 1pt} {\kern 1pt} \ln (3/2)} -{\kern 1pt} {\kern 1pt} \big(\ln [N-1]+\frac{1}{2(N-1)} \big) \big(e^{(2e^{\sigma _{z}^{2} /3} )} -1 \big)} \end{array},
\end{equation}

The last line in~\eqref{eq_53_new} is obtained from the Taylor series approximation of $e^{x} \approx \sum _{{\kern 1pt} n=0}^{{\kern 1pt} N-1}x^{n} /n{\kern 1pt} ! $ for $x\le 1$. For further simplification, for large values of $N$, we approximate $e^{(2e^{\sigma _{z}^{2} } /3)} $ by $(1+2e^{\sigma _{z}^{2} } /(3(N-2)))^{N} =(1+\bar{b})^{N} $ using the identity $\mathop{\lim }\limits_{x\to \infty } {\kern 1pt} {\kern 1pt} (1+1/x)^{x} =e$. As a result,~\eqref{eq_53_new} can be approximated as $-\big(\ln [N-1]+1/(2(N-1))\big) \left((1+\bar{b})^{N} -1\right)$. Now, the equations in~\eqref{eq_53_new} and \eqref{eq_50_new} have a common term via $\left((1+\bar{b})^{N} -1\right)$ which can be factored out. Finally, after some manipulations, the integration denoted as $(**)$ in~\eqref{eq_48_new} is obtained as
\begin{equation} \label{eq_54_new}
\begin{split}
& \displaystyle\int_{0}^{R_{\textmd{W}}}\ln \big[1+\frac{b}{{\kern 1pt} r_{1}^{2} } \big] f_{R_{\,1} } (r_{1}) \,\textmd{d} r_{1} = \frac{1}{2} \ln [1+\bar{b}\,] \\
& ~~~~~~~~~~~~~~~~~~+\big((1+\bar{b})^{N} -1\big) \bigg(\ln \bigg[\frac{1+1.5e^{-\sigma_{z}^{2}} (N-2)}{N-1} \bigg] -\frac{1}{2(N-1)} \bigg)
\end{split},
\end{equation}
which along with \eqref{eq_47_new}, gives the worst ergodic user capacity expression in~\eqref{eq_35_new}.

\section{The Change in Performance for $d \ll R_{\textmd{W}}$}

In this Appendix we investigate how quickly the performance approaches the worst case as $d$ becomes small compared to $R_{\textmd{W}}$, i.e., $d \ll R_{\textmd{W}}$. In this case, an accurate approximation for the distance CDF $F_{R}(r)$ is given as
\begin{equation} \label{eq_36_update}
\begin{split}
F_{R\, | \, d}(r) \! \simeq \! \left\{
\begin{array}{l l c}
\!\! r^2 / R^{\,2}_{\textmd{W}} & ; & 0 \leq r \leq R_{\textmd{W}}-d \\
\!\! \bar{F}_{R}(R_{\textmd{W}}) \simeq 1-\frac{2d}{\pi R_{\textmd{W}}}  & ; & \! R_{\textmd{W}} - d  \leq r \leq R_{\textmd{W}} + d \\
\! \! 1 & ; & \!  R_{\textmd{W}} +d \leq r
\end{array}
\right.
\end{split}
\end{equation}
where $\bar{F}_{R}(\,\cdot\,)$ is given in~\eqref{eq_7}. The sub-script $|d$ in~\eqref{eq_36_update} means that the expression is evaluated at $d$. Now consider the distance CDF at the centre of network denoted as $F_{R\, | \, 0}(r)$. Let $\bar{d}$ denote a small value compared to $R_{\textmd{W}}$, i.e., $\bar{d} \ll R_{W}$. For a given $\bar{d}$, $F_{R\, | \, 0}(r)$ can be approximated as
\begin{equation} \label{eq_37_update}
\begin{split}
F_{R\, | \, 0}(r) \! \simeq \! \left\{
\begin{array}{l l c}
\!\! r^2 / R^{\,2}_{\textmd{W}} & ; & 0 \leq r \leq R_{\textmd{W}}-\bar{d} \\
\!\! 1-\frac{\bar{d}}{R_{\textmd{W}}}  & ; & \! R_{\textmd{W}} - \bar{d}  \leq r \leq R_{\textmd{W}} \\
\! \! 1 & ; & \!  R_{\textmd{W}} \leq r
\end{array}
\right.
\end{split}
\end{equation}
It follows that for any $\bar{d} = d \ll R_{\textmd{W}}$, we get $F_{R \,| \, d}(r) \simeq F_{R \, | \, 0}(r)\,, 0 \leq r \leq R_{\textmd{W}}$, and so $f_{R_{\,1}\, | \, d}(r_{\,1}) \simeq f_{R_{\,1}\, | \, 0}(r_{\,1}) \,, 0 \leq r_{\,1} \leq R_{\textmd{W}}$.

On the other hand, using the Taylor series expansion around $d = 0$, the two terms $(R_{\textmd{W}} + d)^{-\alpha}$ and $(R_{\textmd{W}} + d)^{-2\alpha}$ can be closely approximated as
\begin{equation} \label{eq_38_update}
(R_{\textmd{W}} + d)^{-\alpha} \simeq R^{\,-\alpha}_{\textmd{W}} - \alpha d R^{\,-\alpha-1}_{\textmd{W}} + \frac{\alpha(\alpha+1)}{2} d^{\,2} R^{\,-\alpha-2}_{\textmd{W}},
\end{equation}
\begin{equation} \label{eq_39_update}
(R_{\textmd{W}} + d)^{-2\alpha} \simeq R^{\,-2\alpha}_{\textmd{W}} - 2\alpha d R^{\,-2\alpha-1}_{\textmd{W}} + \alpha(2\alpha+1) d^{\,2} R^{\,-2\alpha-2}_{\textmd{W}} .
\end{equation}

From~\eqref{eq_38_update} -~\eqref{eq_39_update} and the approximation that $F_{R \,| \, d}(r) \simeq F_{R \, | \, 0}(r)\,, 0 \leq r \leq R_{\textmd{W}}$, the parameters $M_{\,1}$ and $ M_{\,2}$ (given in~\eqref{eq_21} -~\eqref{eq_22}) can be rewritten as
\begin{equation} \label{eq_40_update}
M_{\,1| \,d} \simeq M_{\,1| \,0} + \Delta M_{\,1 | \,d},
\end{equation}
\begin{equation} \label{eq_41_update}
M_{\,2| \,d} \simeq M_{\,2| \,0} + \Delta M_{\,2 | \,d},
\end{equation}
where $M_{\,1| \,0}$ and $M_{\,2| \,0}$ are the values of $M_{\,1}$ and $ M_{\,2}$ evaluated at $d = 0$, respectively. Also, $\Delta M_{\,1 | \,d}$ and $\Delta M_{\,2 | \,d}$ are the correction terms given as
\begin{equation} \label{eq_42_update}
\begin{array}{l} {
\Delta M_{\,1 | \,d} = (N-1)\sigma _{s}^{2} e^{\sigma _{z}^{2} /2} \bigg[ -\alpha d R^{\,-\alpha-1}_{\textmd{W}} + \frac{\alpha(\alpha+1)}{2} d^{\,2} R^{\,-\alpha-2}_{\textmd{W}} \! + \!  \displaystyle\int_{(R_{\textmd{W}} + d)^{-\alpha} }^{ R_{\textmd{W}}^{-\alpha}} \!\!\! G_{\,|\,d}(s_{j}^{-1/\alpha}) \,\textmd{d} s_{j} \bigg],}
\end{array}
\end{equation}
and
\begin{equation} \label{eq_43_update}
\begin{split}
& \Delta M_{\,2 | \,d} \! = \! 2(\!N-1)\sigma _{s}^{4} e^{2\sigma _{z}^{2}} \bigg[ \!\! - \! 2\alpha d R^{\,-2\alpha-1}_{\textmd{W}} \!\! + \!\! \alpha(2\alpha+1) d^{\,2} R^{\,-2\alpha-2}_{\textmd{W}} \!\! + \!\! \displaystyle\int_{(R_{\textmd{W}} + d)^{-2\alpha} }^{ R_{\textmd{W}}^{-2\alpha} } \!\!\!\! G_{\,|\,d}(s_{j}^{-1/2\alpha}) \,\textmd{d} s_{j} \! \bigg] \\
& + 4(N-1)(N-2)\sigma _{s}^{4} e^{\sigma _{z}^{2}} \bigg[ - \alpha d R^{\,-\alpha-1}_{\textmd{W}} \! + \! \frac{\alpha(\alpha+1)}{2} d^{\,2} R^{\,-\alpha-2}_{\textmd{W}} \! + \!  \displaystyle\int_{(R_{\textmd{W}} + d)^{-\alpha} }^{ R_{\textmd{W}}^{-\alpha}} \!\!\!\!\! G_{\,|\,d}(s_{j}^{-1/\alpha}) \,\textmd{d} s_{j} \bigg]^{2} \\
& +8(N-1)(N-2)\sigma _{s}^{4} e^{\sigma _{z}^{2}} \bigg[ R^{\,-\alpha}_{\textmd{W}} +  \int_{R_{\textmd{W}}^{-\alpha} }^{ r_{\,1}^{-\alpha}} G_{\,|\,d}(s_{j}^{-1/\alpha}) \,\textmd{d} s_{j} \bigg] \\
& ~~~~~~~~~~~~~~~~~~~~~~~~~~~ \times \bigg[ \! - \alpha d R^{\,-\alpha-1}_{\textmd{W}} \! + \! \frac{\alpha(\alpha+1)}{2} d^{\,2} R^{\,-\alpha-2}_{\textmd{W}} \! + \! \displaystyle\int_{(R_{\textmd{W}} + d)^{-\alpha} }^{ R_{\textmd{W}}^{-\alpha}} \!\!\!\! G_{\,|\,d}(s_{j}^{-1/\alpha}) \,\textmd{d} s_{j} \bigg],
\end{split}
\end{equation}
where $G_{\,|\,d}(\,\cdot \,) = (F_{R \, | \,d}(\,\cdot\,)-F_{R\,|\,d}(r_{1}))/(1-F_{R\,|\,d}(r_{1}))$.

Substituting~\eqref{eq_40_update} -~\eqref{eq_41_update} in~\eqref{eq_mu_I} -~\eqref{eq_sigma_I},
\begin{equation} \label{eq_44_update}
\begin{split}
& \mu _{\textmd{Denom} |\,d} \simeq 2[ \ln (M_{\,1| \,0} + \Delta M_{\,1 | \,d})] -0.5[\ln ( M_{\,2| \,0} + \Delta M_{\,2 | \,d} )] \\
& = \underbrace{2 \ln (M_{\,1| \,0}) -0.5 \ln ( M_{\,2| \,0})}_{\mu _{\textmd{Denom} \, |\,0}} + \underbrace{2 \ln (1+ \Delta M_{\,1 | \,d}/M_{\,1| \,0}) -0.5 \ln (1+ \Delta M_{\,2 | \,d} / M_{\,2| \,0})}_{\Delta \mu _{\textmd{Denom} \, |\,d}},
\end{split}
\end{equation}
\begin{equation} \label{eq_45_update}
\sigma _{\textmd{Denom} |\,d }^{2}  \simeq  \underbrace{-2 \ln (M_{\,1| \,0}) + \ln ( M_{\,2| \,0})}_{\sigma _{\textmd{Denom} \, |\,0 }^{2}} \underbrace{-2 \ln (1+ \Delta M_{\,1 | \,d}/M_{\,1| \,0}) + \ln (1+ \Delta M_{\,2 | \,d} / M_{\,2| \,0})}_{\Delta \sigma _{\textmd{Denom} \,|\,d }^{2}}.
\end{equation}

Since $\mu_{\textmd{Num}}$ and $\sigma_{\textmd{Num}}^{2}$ in~\eqref{eq_15} -~\eqref{eq_16} do not depend on $d$, for a given $r_{\,1}$, we get $\mathtt{SIR}_{\,r_{1}|\,d } \sim \LN(\mu _{\mathtt{SIR} \, | \,d} ,\sigma _{\mathtt{SIR} \, | \, d} )$ with
\begin{equation} \label{eq_46_update}
\mu _{\mathtt{SIR}\, |\, d} = \mu _{\textmd{Num}} - \mu _{\textmd{Denom} \,|\,d} \simeq \mu _{\mathtt{SIR}\, |\, 0} - \Delta \mu _{\textmd{Denom} \, |\,d},
\end{equation}
\begin{equation} \label{eq_47_update}
\sigma _{\mathtt{SIR} \, | \,d}^{2} = \sigma _{\textmd{Num}}^{2} + \sigma _{\textmd{Denom} \,|\,d}^{2} \simeq \sigma _{\mathtt{SIR} \, | \,0}^{2} + \Delta \sigma _{\textmd{Denom} \,|\,d }^{2}.
\end{equation}

As it is shown above, for $d \ll R_{\textmd{W}}$, we can accurately approximate each of the parameters $\mu _{\,\mathtt{SIR}\, |\, d}$ and $\sigma _{\mathtt{SIR} \, | \,d}^{2}$ by the corresponding value obtained at the centre of the finite-area network (with $d =0$) plus a correction term. However, due to the nonlinear structure of the conditional SIR coverage probability given as $\Px\{ \mathtt{SIR}_{\,r_{1} |\,d} > T \} = Q\left( (\ln T - \mu _{\mathtt{SIR}\,|\,d}) / \sigma _{\mathtt{SIR}\,|\,d} \right)$, it is not possible to have $\Px\{ \mathtt{SIR}_{\,r_{1} |\,d} > T \} \simeq \Px\{ \mathtt{SIR}_{\,r_{1} |\,0} > T \} + \Delta \Px\{ \mathtt{SIR}_{\,r_{1} |\,d} > T \}$. As an alternative, for the formulation tractability, we consider the mean of SIR (expressed in dB) averaged over different realizations of nearest AP locations, as the performance metric. It follows that
\begin{equation} \label{eq_48_update}
\begin{split}
\mathtt{SIR}^{\textmd{\,avg}}_{\,|\,d} (\textmd{dB}) & = \int_{0}^{R_{\textmd{W}} + d} 10 \log_{10}( \Ex \{ \mathtt{SIR}_{\,r_{1} \,|\,d} \} ) f_{R_{\,1\,|\,d}}(r_{1})\,d\,r_{1} \\
& =\frac{10}{\ln 10} \int_{0}^{R_{\textmd{W}} + d} (\mu _{\mathtt{SIR}\, |\, d} + \sigma _{\mathtt{SIR} \, | \,d}^{2}/2) f_{R_{\,1\,|\,d}}(r_{1})\,d\,r_{1}.
\end{split}
\end{equation}
Since $f_{R_{\,1}\, | \, d}(r_{\,1}) \simeq f_{R_{\,1}\, | \, 0}(r_{\,1}) \,, 0 \leq r_{\,1} \leq R_{\textmd{W}}$ and $f_{R_{\,1}\, | \, 0}(r_{\,1}) \simeq 0 \,, r_{\,1} \geq 0.75 R_{\textmd{W}}$ (see Fig.~\ref{Appendix_I_Fig_1} in Appendix A), Eq.~\eqref{eq_48_update} simplifies to
\begin{equation} \label{eq_49_update}
\begin{split}
\mathtt{SIR}^{\textmd{\,avg}}_{\,|\,d} (\textmd{dB}) \simeq & \underbrace{\frac{10}{\ln 10} \int_{0}^{R_{\textmd{W}}} (\mu _{\mathtt{SIR}\, |\, 0} + \sigma _{\mathtt{SIR} \, | \,0}^{2}/2) f_{R_{\,1\,|\,0}}(r_{1})\,d\,r_{1}}_{\mathtt{SIR}^{\textmd{\,avg}}_{\,|\,0} (\textmd{dB})} \\
& ~~~~+ \underbrace{\frac{10}{\ln 10} \int_{0}^{R_{\textmd{W}} + d} (- \Delta \mu _{\textmd{Denom} \, |\,d} + \Delta  \sigma _{\textmd{Denom} \,|\,d }^{2}/2) f_{R_{\,1\,|\,d}}(r_{1})\,d\,r_{1}}_{\Delta \mathtt{SIR}^{\textmd{\,avg}}_{\,|\,d} (\textmd{dB})}.
\end{split}
\end{equation}
where $\mathtt{SIR}^{\textmd{\,avg}}_{\,|\,0} (\textmd{dB})$ is the worst-case performance obtained at the centre of the finite-area network and $\Delta \mathtt{SIR}^{\textmd{\,avg}}_{\,|\,d} (\textmd{dB})$ is the associated correction term given as a function of $d$. Fig.~\ref{Delta_performance} illustrates $\Delta \mathtt{SIR}^{\textmd{\,avg}}_{\,|\,d} (\textmd{dB})$ (evaluated numerically) versus $d$ for different values of $\lambda$ in an interference-limited network with $\alpha =3.87$, $\sigma_L = 6~\textmd{dB}$, and $R_{\textmd{W}} = 1~\textmd{km}$.
\begin{figure}[t]
\begin{center}
\includegraphics[width = .5\textwidth]{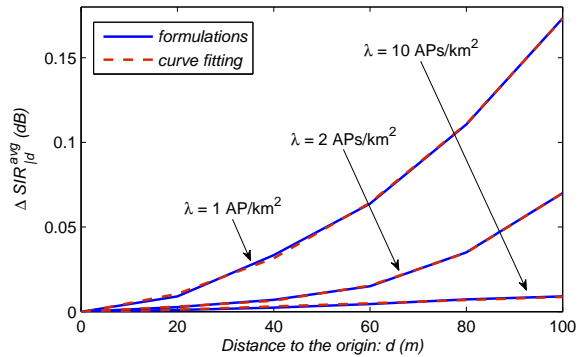}
\caption{Performance correction term $\Delta \mathtt{SIR}^{\textmd{\,avg}}_{\,|\,d} (\textmd{dB})$ versus $d$ for different values of $\lambda$ in an interference-limited network with $\alpha =3.87$, $\sigma_L = 6~\textmd{dB}$, and $R_{\textmd{W}} = 1~\textmd{km}$.}
\label{Delta_performance}
\end{center}
\end{figure}
In general, a smaller change in performance is observed for higher values of $\lambda$, to the extent that, in a dense interference-limited network there is only a very slight change in performance compared to the result obtained at the centre. Further, in order to quantify the change in performance, we may use a polynomial fit as a function of $d$ to provide a simple closed-form expression for $\Delta \mathtt{SIR}^{\textmd{\,avg}}_{\,|\,d} (\textmd{dB})$ as
\begin{equation} \label{eq_50_update}
\Delta \mathtt{SIR}^{\textmd{\,avg}}_{\,|\,d} (\textmd{dB}) \simeq  \sum_{i=0}^{n}a_{\,i}d^{\,n-i},
\end{equation}
where $a_{\,i}$, $i=1,\cdots,n$ are the coefficients found by curve fitting the numerical values obtained from formulations. The resultant curve-fitting approximations are also included in Fig.~\ref{Delta_performance} (the dotted lines) using a $3^{\textrm{rd}}$ order polynomial. The associated coefficients are given in Table~\ref{Tab1}. As is clear, for the example under consideration, a $3^{\textrm{rd}}$ order polynomial is adequate to describe $\Delta \mathtt{SIR}^{\textmd{\,avg}}_{\,|\,d} (\textmd{dB})$, though, if required, an even more accurate fit is possible with a higher polynomial degree.
\begin{table}
\caption{The coefficients, $a_{\,i}; i=0,\cdots,3$, of the polynomial in~\eqref{eq_50_update} for AP densities $\lambda=1~\textmd{AP/km}^2$, $\lambda=2~\textmd{APs/km}^2$ and $\lambda=10~\textmd{APs/km}^2$ in an interference-limited network with $\alpha =3.87$, $\sigma_L = 6~\textmd{dB}$, and $R_{\textmd{W}} = 1~\textmd{km}$.}
 \label{Tab1}
 \begin{center}
 \begin{tabular}{|c||c|c|c|c|}
   \hline
   Coefficients & $a_0$ & $a_1$ & $a_2$ & $a_3$\\
 \hline \hline
   $\lambda=1~\textmd{AP/km}^2$   & $3.91 \times 10^{-8}$ &        $1.13 \times 10^{-5}$ &        $3.31 \times 10^{-4}$ &       $-4.38 \times 10^{-4}$ \\  \hline
   $\lambda=2~\textmd{APs/km}^2$   & $1.06 \times 10^{-7} $ &        $ -5.94 \times 10^{-6} $ &          $ 2.31 \times 10^{-4} $ &      $ -7.77 \times 10^{-5} $ \\  \hline
   $\lambda=10~\textmd{APs/km}^2$  & $0$ &    $ 1.21 \times 10^{-8} $ &        $ 9.41 \times 10^{-5} $ &       $ -6.73 \times 10^{-4} $ \\  \hline
 \end{tabular}
 \end{center}
 \end{table}

\bibliography{ref}
\end{document}